\newcommand{\Nat}{\mathbb N}
\newcommand{\restr}{{\restriction}}
\renewcommand{\phi}{\varphi}
\renewcommand{\epsilon}{\varepsilon}
\newcommand{\Pht}[1]{\mathit{Pht}_{#1}}
\newcommand{\pht}[3]{[#2]_{#1}^{#3}}
\newcommand{\Comp}{\mathit{Comp}}
\newcommand{\arr}{\mathbin{\to}}
\newcommand{\BT}{\mathit{BT}}
\newcommand{\unbound}{\ensuremath{\mathsf U}\xspace}
\newcommand{\msoufin}{{\upshape MSO+}\ensuremath{\mathsf{U^{fin}}}\xspace}
\newcommand{\wmsou}{{\upshape WMSO+}\unbound}
\newcommand{\msou}{{\upshape MSO+}\unbound}
\newcommand{\mso}{{\upshape MSO}\xspace}
\newcommand{\child}{\curlywedgedownarrow}
\newcommand{\Aa}{{\mathcal A}}
\newcommand{\Bb}{{\mathcal B}}
\newcommand{\Cc}{{\mathcal C}}
\newcommand{\Uu}{{\mathcal U}}
\newcommand{\Gg}{{\mathcal G}}
\newcommand{\Ll}{{\mathcal L}}
\newcommand{\Nn}{{\mathcal N}}
\newcommand{\Pp}{{\mathcal P}}
\newcommand{\Ppfin}{\Pp^\mathsf{fin}}
\newcommand{\Rr}{{\mathcal R}}
\newcommand{\Tt}{{\mathcal T}}
\newcommand{\Vv}{{\mathcal V}}
\newcommand{\Xvar}{\mathsf{X}}
\newcommand{\Yvar}{\mathsf{Y}}
\newcommand{\Zvar}{\mathsf{Z}}
\newcommand{\Fvar}{\mathsf{F}}
\newcommand{\Vars}{\mathit{Vars}}
\newcommand{\Mnon}{M}
\newcommand{\Nnon}{N}
\newcommand{\efin}{{\exists_{\mathsf{fin}}}}
\renewcommand{\nd}{\mathsf{nd}}
\newcommand{\dom}{\mathrm{dom}}
\newcommand{\SUP}{\mathsf{SUP}}
\newcommand{\rednd}{\to_\nd}
\newcommand{\args}{s}
\newcommand{\nChld}{r}
\newcommand{\rMax}{\nChld_{\max}}
\newcommand{\tK}{K}
\newcommand{\tL}{L}
\newcommand{\xVar}{\mathsf{x}}
\newcommand{\unode}{u}
\newcommand{\vnode}{v}
\newcommand{\wnode}{w}
\newcommand{\VALempty}{\nu_\emptyset}
\newcommand{\true}{\mathsf{tt}}
\newcommand{\false}{\mathsf{ff}}
\newcommand{\Sigmaout}{\Sigma^\mathsf{out}}
\newcommand{\Sigmain}{\Sigma^\mathsf{in}}
\newcommand{\NIni}{N_\mathsf{0}}
\newcommand{\qIni}{q_\mathsf{0}}
\newcommand{\pIni}{p_\mathsf{0}}
\newcommand{\QImp}{Q_\mathsf{imp}}
\newcommand{\SigmaF}{\Sigma_\mathsf{F}}
\newcommand{\otyp}{\mathsf{o}}
\newcommand{\PD}[2]{\mathcal{PD}_{#1,#2}}
\newcommand{\op}{\mathit{op}}
\newcommand{\id}{\mathsf{id}}
\newcommand{\Op}[2]{\mathit{Op}_{#1,#2}}
\newcommand{\topS}{\mathit{top}}
\newcommand{\lamdots}{.\cdots{}.}
\NewDocumentCommand{\symb}{ m o o o o o o }{#1\langle\IfValueT{#2}{#2}\IfValueT{#3}{,#3}\IfValueT{#4}{,#4}\IfValueT{#5}{,#5}\IfValueT{#6}{,#6}\IfValueT{#7}{,#7}\rangle}
\definecolor{darkgreen}{RGB}{0,191,0}
\newcommand{\newqed}{\qed}
\theoremstyle{definition}
\newtheorem{exam}[thm]{Example}
\newtheorem*{rem*}{Remark}
\theoremstyle{plain}
\newtheorem{factfact}[thm]{Fact}
\theoremstyle{thmC} 
\newtheorem{factfactC}[thm]{Fact}
\keywords{higher-order recursion schemes, intersection types, \wmsou logic, boundedness}
\begin{document}

\title{Recursion Schemes, the \mso Logic, and the \unbound quantifier}

\author[P.~Parys]{Paweł Parys}
\address{Institute of Informatics, University of Warsaw}
\email{parys@mimuw.edu.pl}
\thanks{Work supported by the National Science Centre, Poland (grant no.\ 2016/22/E/ST6/00041).}

\begin{abstract}
	We study the model-checking problem for recursion schemes: does the tree generated by a given higher-order recursion scheme satisfy a given logical sentence.
	The problem is known to be decidable for sentences of the \mso logic.
	We prove decidability for an extension of \mso in which we additionally have an unbounding quantifier \unbound,
	saying that a subformula is true for arbitrarily large finite sets.
	This quantifier can be used only for subformulae in which all free variables represent finite sets
	(while an unrestricted use of the quantifier leads to undecidability).

	We also show that the logic has the properties of reflection and effective selection for trees generated by recursion schemes.
\end{abstract}

\maketitle

\section{Introduction}\label{sec:intro}

	\emph{Higher-order recursion schemes} (\emph{schemes} in short) are used to faithfully represent the control flow of programs in languages with higher-order functions~\cite{Damm82, KNU-hopda, Ong-hoschemes, KobayashiPrograms}.
	This formalism is equivalent via direct translations to simply-typed $\lambda Y$-calcu\-lus~\cite{schemes-lY}.
	Collapsible pushdown systems~\cite{collapsible} and ordered tree-pushdown systems~\cite{Ordered-Tree-Pushdown} are other equivalent formalisms.
	Schemes cover some other models such as indexed grammars~\cite{AhoIndexed} and ordered multi-pushdown automata~\cite{OrderedMultiPushdown}.

	In our setting, a scheme is a finite description of an infinite tree.
	A useful property of schemes is that the \emph{MSO-model-checking problem} for schemes is decidable.
	This means that given a scheme $\Gg$ and an \mso sentence $\phi$, it can be algorithmically decided whether the tree generated by $\Gg$ satisfies $\phi$.
	This result has several different proofs~\cite{Ong-hoschemes, collapsible, KobayashiOngtypes, KrivineWS},
	and also some extensions like global model checking~\cite{globalMC}, logical reflection~\cite{reflection}, effective selection~\cite{selection}, existence of lambda-calculus model~\cite{ModelSM}.
	When the property of trees is given as an automaton, not as a formula, the model-checking problem can be solved efficiently,
	in the sense that there exist implementations working in a reasonable running time~\cite{KobayashiPrograms, GTRecS, HorSat, PrefaceTool, TravMC2} (most tools cover only a fragment of \mso, though).

	Recently, an interest has arisen in model-checking trees generated by schemes against properties not expressible in the \mso logic.
	These are properties expressing boundedness and unboundedness of some quantities.
	More precisely, it was shown that the \emph{simultaneous unboundedness problem} (aka.~\emph{diagonal problem}) for schemes is decidable~\cite{diagonal-safe, diagonal, types-diagonal}.
	This problem asks, given a scheme $\Gg$ and a set of letters $A$, whether for every $n\in\Nat$ there exists a path in the tree generated by $\Gg$ such that every letter from $A$ appears on this path at least $n$ times.
	This result turns out to be interesting, because it entails other decidability results for recursion schemes,
	concerning in particular computability of the downward closure of recognized languages~\cite{Zetzsche-dc},
	and the problem of separability by piecewise testable languages~\cite{sep-piecewise-test}.

	In this paper we show a result of a more general style.
	Instead of considering a particular property, like in the simultaneous unboundedness problem, we consider a logic capable to express properties talking about boundedness.
	More precisely, we extend the \mso logic by the unbounding quantifier, \unbound~\cite{BojanczykU}.
	A formula using this quantifier, $\unbound \Xvar.\phi$, says that $\phi$ holds for arbitrarily large finite sets $\Xvar$.
	We impose a restriction that $\unbound \Xvar.\phi$ can be used only in a context where all free variables of $\phi$ represent finite sets.
	We call the resulting logic \msoufin.

	The goal of this paper is to prove the following theorem.

	\begin{thm}\label{thm:main}
		It is decidable whether the tree generated by a given scheme satisfies a given \msoufin sentence.
	\end{thm}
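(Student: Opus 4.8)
The plan is to mirror and extend the classical route to decidability of \mso-model-checking for schemes: translate the logic into a suitable automata model, translate acceptance of the generated tree by that automaton into a (quantitative form of) typability of the scheme in an intersection type system, and then decide that typability.

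First I would reformulate the problem automata-theoretically. Over infinite trees the \unbound quantifier has no ordinary automaton, so I would use a model combining an alternating parity condition with counters in the style of cost automata --- B-counters, whose unboundedness over the whole space of runs is what is tested, together with dual S-counters --- engineered so that each $\unbound\Xvar.\phi$ is simulated by guessing the finite set $\Xvar$, measuring its size with a fresh B-counter, and declaring success when that counter is unbounded; the \ensuremath{\mathsf{U^{fin}}}-restriction guarantees that these counters sit properly nested inside the parity structure and inside one another, so a bounded hierarchy of counters suffices. The usual Rabin-style induction on the formula (dualization for negation, nondeterministic guessing for second-order quantifiers, with finiteness of a guessed set encoded by a universal co-Büchi side condition on its characteristic predicate) then yields, for every \msoufin sentence, an equivalent automaton of this kind; conversely ``the automaton accepts the tree $T$'' is itself \msoufin-expressible. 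Hence Theorem~\ref{thm:main} reduces to: given a scheme $\Gg$ and such a counting-parity automaton $\Aa$, decide whether $\Aa$ accepts the tree generated by $\Gg$. (The special case without a genuine parity condition, only counters, corresponds to \wmsou-model-checking and could be settled first.)

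Second I would characterise acceptance via types. The parity part is handled by the Kobayashi--Ong intersection type system with priorities, in which $\Gg$ is typable exactly when its tree is accepted by the underlying parity automaton. To incorporate the counters I would enrich the types with ``markers'' in the spirit of the type systems used for the simultaneous-unboundedness (diagonal) problem, recording, inside each priority block of a derivation, how many times each counter is incremented before a reset. Acceptance by $\Aa$ then becomes a quantitative typability statement: there is a family of derivations of the fixed typing judgement for $\Gg$ in which the markers for the relevant B-counters appear arbitrarily often (with, for the S-counters, a dual boundedness requirement). The heart of the argument is the soundness and completeness of this characterisation --- turning a (family of) derivation(s) into an accepting (family of) run(s) of $\Aa$ on the generated tree and back, keeping counters and priorities synchronised on the two sides --- exactly as the classical soundness/completeness of Kobayashi--Ong types is the heart of the \mso case.

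Third I would decide quantitative typability. Since there are only finitely many intersection types of each order and finitely many markers, one can saturate the set of derivable judgements together with the attainable marker multiplicities; a pumping argument then shows that arbitrarily large marker counts are attainable precisely when some derivation contains a cycle that strictly increases a marker without crossing a reset of higher priority, which is a finite-state condition, so the whole question is decidable. Reflection and effective selection (the second claim of the abstract) fall out of the same machinery: running the procedure and using the witnessing (families of) derivations, one builds a new scheme that relabels each node of the generated tree by the truth values of the subformulae, respectively extracts a witnessing set. The main obstacle is the correctness of the quantitative type characterisation: pinning down, in a genuinely finite type system, exactly the right amount of counter/marker bookkeeping so that pumping a derivation corresponds to the existence of arbitrarily large finite witness sets \emph{at the correct level of \unbound-nesting}, neither more nor less, and so that the interaction between counters and the priority hierarchy is faithful. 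Making the well-nestedness enforced by the \ensuremath{\mathsf{U^{fin}}}-restriction coincide with a finite-state invariant --- the point on which both the logic-to-automaton translation and the final pumping step rest --- is where the real difficulty lies.
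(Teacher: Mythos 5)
Your overall strategy (logic $\to$ automata $\to$ intersection types $\to$ decidability) is a genuinely different route from the paper's, which instead translates \msoufin into a \emph{nested, layered} automaton model and then consumes three black boxes: \mso logical reflection, SUP reflection (the reflection version of the simultaneous unboundedness / diagonal problem), and closure of schemes under composition with finite top-down tree transducers. But as written your proposal has two concrete gaps.

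First, the translation into a single alternating counting-parity automaton with ``dualization for negation'' is exactly the step that cannot work in the generality you describe. Full \msou is undecidable already on the unlabelled infinite word, so there is no decidable automaton model over infinite trees that captures the \unbound quantifier and is closed under complementation and projection; and complementing ``there exist runs with arbitrarily large counter value'' over infinite trees is the notoriously hard part of cost automata. The $\mathsf{U^{fin}}$ restriction must therefore be exploited structurally, not just as ``the counters are well-nested.'' The paper's mechanism is: since in $\unbound\Fvar.\phi'$ all free variables of $\phi'$ are finite-set variables, outside a finite prefix of the tree the valuation is empty, so the phenotype of $\phi'$ under the empty valuation can be precomputed per subtree and written into the labels; the \unbound quantifier then reduces to asking, for each subtree separately, whether a ``prefix automaton'' (one that assigns non-$\top$ states to only finitely many nodes) has runs with arbitrarily many important states. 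Each such layer outputs a $0/1/2$ annotation before the next layer runs, so negation only ever acts on letters, never on an unboundedness condition. Your induction needs to produce this stratified object, not a monolithic automaton.

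Second, your steps two and three --- a marker-enriched Kobayashi--Ong type system with soundness/completeness and a saturation-plus-pumping decision procedure for unbounded marker multiplicities --- amount to reproving the simultaneous unboundedness theorem for schemes, which is a deep result in its own right; the ``cycle that increments a marker without crossing a higher reset'' argument is a regular-language intuition that does not transfer directly to higher-order derivations (the known proofs go through a delicate order-by-order reduction). The paper deliberately does not redo this: it invokes SUP reflection as a black box, and reduces the per-subtree unboundedness question of each \unbound-layer to it by first composing the scheme with a finite tree transducer that lays out all runs of the prefix automaton as a nondeterministic ($\nd$-labelled) tree --- a step your proposal omits entirely. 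If you replace your monolithic automaton by the nested model and your type-theoretic core by the SUP-reflection theorem, you essentially recover the paper's proof.
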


	We remark that the \msoufin logic extends the \wmsou logic,
	which was widely considered in the context of infinite words~\cite{wmso+u-words} and infinite trees~\cite{wmso+u-kaiser, wmso+u-trees, wmso+up}.
	The difference is that in \wmsou only quantification over finite sets is allowed.
	In consequence, \wmsou cannot express all properties of \mso~\cite{hummel-skrzypczak-topological}.
	On the other hand, in \msoufin we allow quantification over infinite sets like in standard \mso,
	and we only restrict the use of the \unbound quantifier to subformulae in which all free variables represent finite sets.

	Furthermore, we remark that some restriction for the \unbound quantifier is necessary.
	Indeed, the model-checking problem for the full \msou logic (where the \unbound quantifier can be used in an unrestricted way)
	is undecidable already over the infinite word without labels~\cite{mso+u-undecid},
	so even more over all fancy trees that can be generated by higher-order recursion schemes.

	While proving Theorem~\ref{thm:main}, we depend on several earlier results.
	First, we translate \msoufin formulae to an equivalent automata model using the notion of logical types (aka.\ composition method) following a long series of previous work
	(some selection:~\cite{FefermanVaught, Shelah, Lauchli, BlumensathColcombet, wmso+u-kaiser, wmso-model}).
	Second, we use the logical-reflection property of schemes~\cite{reflection}.
	It says that given a scheme $\Gg$ and an \mso sentence $\phi$ one can construct a scheme $\Gg_\phi$ generating the same tree as $\Gg$,
	where in every node it is additionally written whether $\phi$ is satisfied in the subtree starting in this node.
	Third, we use an analogous property for the simultaneous unboundedness problem, called \emph{SUP reflection}~\cite{types-diagonal-journal}:
	given a scheme $\Gg$ we can construct a scheme $\Gg_\mathit{SUP}$ generating the same tree as $\Gg$,
	where every node is additionally annotated by the solution of the simultaneous unboundedness problem in the subtree starting in this node.
	Finally, we use the fact that schemes can be composed with finite tree transducers transforming the generated trees;
	this follows directly from the equivalence between schemes and collapsible pushdown systems~\cite{collapsible}.

	Although our algorithm depends on a solution to the simultaneous unboundedness problem, it is not known whether the simultaneous unboundedness problem itself can be expressed in \msoufin.
	The difficulty is that every \unbound quantifier can entail unboundedness only of a single quantity,
	and it seems difficult to express simultaneous unboundedness of multiple quantities in \msoufin.

	This paper is an extended version of a conference paper~\cite{wmsou-schemes}, where the result is shown for the \wmsou logic.
	Besides the fact that we work here with a slightly stronger logic (namely, with \msoufin instead of \wmsou),
	our proofs follow basically the same ideas as proofs contained in the conference paper~\cite{wmsou-schemes}.
	We remark that the conference paper~\cite{wmsou-schemes} contained additionally a justification of the SUP-reflection property for schemes.
	This justification was already expanded in another paper~\cite{types-diagonal-journal},
	and for this reason we do not include it here.

	Our paper is structured as follows.
	In Section~\ref{sec:prelim} we introduce all necessary definitions.
	In Section~\ref{sec:automata} we show how to translate \msoufin sentences to automata.
	In Section~\ref{sec:main-thm} we prove the main theorem.
	Section~\ref{sec:conclusion} contains a few extensions of the main theorem.

\section{Preliminaries}\label{sec:prelim}

	The powerset of a set $X$ is denoted $\Pp(X)$, and the set of finite subsets of $X$ is denoted $\Ppfin(X)$.
	For a relation $r$, we write $r^*$ for the reflexive transitive closure of $r$.
	When $f$ is a function, by $f[x\mapsto y]$ we mean the function that maps $x$ to $y$ and every other $z\in\dom(f)$ to $f(z)$.

\subsection*{Infinitary lambda-calculus.}
	We consider infinitary, simply-typed lambda-calculus.
	In particular, each lambda-term has an associated sort (aka.\ simple type).
	The set of \emph{sorts} is constructed from a unique ground sort $\otyp$ using a binary operation $\arr$; namely $\otyp$ is a sort, and if $\alpha$ and $\beta$ are sorts, so is $\alpha\arr\beta$.
	By convention, $\arr$ associates to the right, that is, $\alpha\arr\beta\arr\gamma$ is understood as $\alpha\arr(\beta\arr\gamma)$.

	While defining lambda-terms we assume 
	a set of variables $\Vars^\lambda=\{x^\alpha,y^\beta,z^\gamma,\dots\}$ containing infinitely many variables of every sort (sort of a variable is written in superscript).
	\emph{Infinitary lambda-terms} (or just \emph{lambda-terms}) are defined by coinduction, according to the following rules:
	\begin{itemize}
	\item	node constructor---if $\tK_1^\otyp,\dots,\tK_\nChld^\otyp$ are lambda-terms, and $a$ is an arbitrary object, called a \emph{letter},
		then $(\symb{a}[\tK_1^\otyp][\dots][\tK_\nChld^\otyp])^\otyp$ is a lambda-term,
	\item	variable---every variable $x^\alpha\in\Vars^\lambda$ is a lambda-term,
	\item	application---if $\tK^{\alpha\arr\beta}$ and $\tL^\alpha$ are lambda-terms, then $(\tK^{\alpha\arr\beta}\,\tL^\alpha)^\beta$ is a lambda-term, and
	\item	lambda-binder---if $\tK^\beta$ is a lambda-term and $x^\alpha$ is a variable, then $(\lambda x^\alpha.\tK^\beta)^{\alpha\arr\beta}$ is a lambda-term.
	\end{itemize}
	Sets of letters are called \emph{alphabets}.
	We use unranked letters; this subsumes the setting of ranked letters.
	We naturally identify lambda-terms differing only in names of bound variables.
	We often omit the sort annotations of lambda-terms, but we keep in mind that every lambda-term (and every variable) has a fixed sort.
	Free variables and subterms of a lambda-term, as well as beta-reductions, are defined as usual.
	A lambda-term $\tK$ is \emph{closed} if it has no free variables.
	We restrict ourselves to those lambda-terms for which the set of sorts of all subterms is finite.

\subsection*{Trees; B\"ohm trees.}

	A \emph{tree} is defined as a lambda-term that is built using only node constructors, that is, not using variables, applications, nor lambda-binders.
	For a tree $T=\symb{a}[T_1][\dots][T_\nChld]$, its set of nodes is defined as the smallest set such that
	\begin{itemize}
	\item	$\epsilon$ is a node of $T$, labeled by $a$, and
	\item	if $\unode$ is a node of $T_i$ for some $i\in\{1,\dots,\nChld\}$, labeled by $b$, then $i\unode$ is a node of $T$, also labeled by $b$.
	\end{itemize}
	A node $\vnode$ is the \emph{$i$-th child} of $\unode$ if $\vnode=\unode i$.
	We say that two trees $T,T'$ are of \emph{the same shape} if they have the same nodes.
	A tree $T$ is \emph{over alphabet $\Sigma$} if all labels of its nodes belong to $\Sigma$,
	and it has \emph{maximal arity} $\rMax\in\Nat$ if its every node has at most $\rMax$ children.
	When both these conditions are satisfied, we say that $T$ is a \emph{$(\Sigma,\rMax)$-tree}.
	For a tree $T$ and its node $u$, by $T\restr_u$ we denote the \emph{subtree} of $T$ starting at $u$, defined in the expected way.

	We consider B\"ohm trees only for closed lambda-terms of sort $\otyp$.
	For such a lambda-term $\tK$, its \emph{B\"ohm tree} is constructed by coinduction, as follows:
	if there is a sequence of beta-reductions from $\tK$ to a lambda-term of the form $\symb{a}[\tK_1][\dots][\tK_\nChld]$,
	and $T_1,\dots,T_\nChld$ are B\"ohm trees of $\tK_1,\dots,\tK_\nChld$, respectively, then $\symb{a}[T_1][\dots][T_\nChld]$ is a B\"ohm tree of $\tK$;
	if there is no such sequence of beta-reductions from $\tK$, then $\symb{\omega}$ is a B\"ohm tree of $\tK$ (where $\omega$ is a fixed letter).
	It is folklore that every closed lambda-term of sort $\otyp$ has exactly one B\"ohm tree (the order in which beta-reductions are performed does not matter); this tree is denoted by $\BT(\tK)$.

	A closed lambda-term $\tK$ of sort $\otyp$ is called \emph{fully convergent} if every node of $\BT(\tK)$ is explicitly created by a node constructor from $\tK$
	(e.g., $\symb{\omega}$ is fully convergent, while $K=(\lambda x^\otyp.x)\,K$ is not).
	More formally: we consider the lambda-term $\tK_{-\omega}$ obtained from $\tK$ by replacing $\omega$ with some other letter $\omega'$, and
	we say that $\tK$ is fully convergent if in $\BT(\tK_{-\omega})$ there are no $\omega$-labeled nodes.

\subsection*{Higher-order recursion schemes.}

	Our definition of schemes is less restrictive than usually, as we see them only as finite representations of infinite lambda-terms.
	Thus a \emph{higher-order recursion scheme} (or just a \emph{scheme}) is a triple $\Gg=(\Nn,\Rr,\NIni^\otyp)$, where
	\begin{itemize}
	\item
	$\Nn\subseteq\Vars^\lambda$ is a finite set of nonterminals,
	\item
	$\Rr$ is a function that maps every nonterminal $N\in\Nn$ to a finite lambda-term whose all free variables are contained in $\Nn$ and whose sort equals the sort of $N$,
	and \item $\NIni^\otyp\in\Nn$ is a starting nonterminal, being of sort $\otyp$.
	\end{itemize}
	We assume that elements of $\Nn$ are not used as bound variables, and that $\Rr(N)$ is not a nonterminal for any $N\in\Nn$.

	For a scheme $\Gg=(\Nn,\Rr,\NIni)$, and for a lambda-term $\tK$ whose free variables are contained in $\Nn$,
	we define the infinitary lambda-term \emph{represented by} $\tK$ with respect to $\Gg$, denoted $\Lambda_\Gg(\tK)$, by coinduction:
	to obtain $\Lambda_\Gg(\tK)$ we replace in $\tK$ every nonterminal $N\in\Nn$ with $\Lambda_\Gg(\Rr(N))$.
	Observe that $\Lambda_\Gg(\tK)$ is a closed lambda-term of the same sort as $\tK$.
	The infinitary lambda-term \emph{represented by} $\Gg$, denoted $\Lambda(\Gg)$, equals $\Lambda_\Gg(\NIni)$.

	By the \emph{tree generated by $\Gg$} we mean $\BT(\Lambda(\Gg))$.
	We write $\Sigma_\Gg$ for the finite alphabet containing $\omega$ and letters used in node constructors appearing in $\Gg$,
	and $\rMax(\Gg)$ for the maximal arity of node constructors appearing in $\Gg$.
	Clearly $\BT(\Lambda(\Gg))$ is a $(\Sigma_\Gg,\rMax(\Gg))$-tree.

	In our constructions it is convenient to consider only schemes representing fully-conver\-gent lambda-terms, which is possible due to the following standard result.

	\begin{factfactC}[\cite{haddad-fics,models-reflection}]\label{fact:convergent}
		For every scheme $\Gg$ we can construct a scheme $\Gg'$ generating the same tree as $\Gg$, and such that $\Lambda(\Gg')$ is fully convergent.
	\newqed\end{factfactC}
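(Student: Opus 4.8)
\emph{Proof proposal.}
The plan is to locate precisely where $\BT(\Lambda(\Gg))$ can acquire an $\omega$-labelled node that is \emph{not} written by an explicit node constructor, and then to patch the scheme there. By the definition of B\"ohm trees, such a spurious $\omega$ appears exactly when, during the corecursive construction of $\BT$, the head-reduction of some closed ground-sort subterm never exposes a node constructor at the head; in that case the whole subtree collapses to the single node $\symb{\omega}$. Hence it suffices to detect these head-diverging subterms and to replace each of them by the literal term $\symb{\omega}$: this term is fully convergent, and its B\"ohm tree is exactly the single node $\symb{\omega}$, so the generated tree is unchanged.

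To detect head-divergence compositionally I would use a finitary model of the infinitary simply-typed lambda-calculus. Interpret the ground sort $\otyp$ by the two-element lattice $\mathbb B=\{\bot<\top\}$ and each sort $\alpha\arr\beta$ by the set of monotone functions from $\sem{\alpha}$ to $\sem{\beta}$; this set is finite because $\Gg$ uses only finitely many sorts. Node constructors are sent to $\top$ (a constructor is always productive), application to function application, and $\lambda$ to abstraction. As $\Nn$ is finite, each $\sem{\alpha}$ is a finite lattice, and all operations are monotone, the system $\sem{N}=\sem{\Rr(N)}$ (with the nonterminals as unknowns) has a least solution, computable by a finite fixpoint iteration; this assigns a value $\sem{M}\in\sem{\alpha}$ to every subterm $M^\alpha$ of $\Gg$ relative to any assignment of semantic values to its free lambda-variables. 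The key \emph{adequacy} claim is that, under the assignment induced by the least fixpoint, $\Lambda_\Gg(M)$ head-converges to a node constructor iff $\sem{M}=\top$. Its ``only if'' half (soundness) is easy: the interpretation is invariant under beta-reduction and $\top$ can be produced only by a constructor. Its ``if'' half (completeness) is proved, as usual, by a logical relation / reducibility predicate linking semantic values with terms.

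Granting adequacy, the transformation is type-directed. Since the model is compositional, $\sem{M}$ depends only on the values of the free lambda-variables of $M$, which range over finite sets, so I would rebuild the scheme so that the value computed for each subterm is recorded in its syntax. Concretely: duplicate each nonterminal $N$ into finitely many copies, one per tuple of semantic values promised for its arguments; propagate through each right-hand side the value attached to every subterm and to every bound variable in scope; route each call to the copy matching the computed argument values; and finally replace every ground-sort subterm carrying value $\bot$ by $\symb{\omega}$. Let $\Gg'$ be the result. By the soundness half of adequacy, every subterm we replaced was going to collapse to $\symb{\omega}$ in $\BT(\Lambda(\Gg))$ anyway, so $\BT(\Lambda(\Gg'))=\BT(\Lambda(\Gg))$; monotonicity of the model then shows that no ground-sort subterm of $\Lambda(\Gg')$ still has value $\bot$ (each such subterm has value $\top$, in particular the freshly inserted $\symb{\omega}$, which is a constructor). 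By the completeness half of adequacy, each ground-sort subterm that is ever head-reduced while building $\BT(\Lambda(\Gg'))$ therefore head-converges to a genuine node constructor, so no spurious $\omega$ is ever created; that is, $\Lambda(\Gg')$ is fully convergent.

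The hard part will be the completeness direction of the adequacy lemma: $\sem{M}=\top$ must force head-convergence even though $\Lambda(\Gg)$ is an infinite term and the nonterminals are fixpoint-defined, so the logical relation has to be closed under the (possibly infinite) unfolding of the nonterminals --- this is exactly where the restriction to finitely many sorts, and the choice of the \emph{least} fixpoint for the nonterminal denotations, are both used. The rest is routine bookkeeping: that a copy of $N$ is only ever applied to arguments whose semantic values match its index, and that $\Lambda(\Gg')$ agrees with $\Lambda(\Gg)$ everywhere outside the patched $\omega$-positions. Both ingredients are the content of the cited works~\cite{haddad-fics,models-reflection}.
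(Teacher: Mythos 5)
The paper offers no proof of this Fact --- it is stated as a citation to the literature --- and your proposal correctly reconstructs the standard argument used in those cited works: interpret the scheme in a finite monotone model with a two-point lattice at ground sort, take the \emph{least} fixpoint for nonterminals, prove adequacy by a logical relation, reflect the semantic values into duplicated nonterminals, and replace every $\bot$-valued ground subterm by the explicit constructor $\symb{\omega}$. This is essentially the same approach as the references the paper relies on, and the steps you defer (computational adequacy over infinite unfoldings, and the bookkeeping of the nonterminal duplication) are exactly the content of those references.
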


	\begin{exam}
		Consider the scheme $\Gg_1=(\{\Mnon^\otyp,\Nnon^{\otyp\arr\otyp}\},\Rr,\Mnon)$, where
		\begin{align*}
			\Rr(\Nnon)=\lambda x^\otyp.\symb{a}[x][\Nnon\,(\symb{b}[x])]\,,
			&&\mbox{and}&&
			\Rr(\Mnon)=\Nnon\,(\symb{c})\,.
		\end{align*}
		We obtain $\Lambda(\Gg_1)=\tK\,(\symb{c})$, where $\tK$ is the unique lambda-term for which it holds $\tK=\lambda x^\otyp.\symb{a}[x][\tK\,(\symb{b}[x])]$.
		The tree generated by $\Gg_1$ equals $\symb{a}[T_0][\symb{a}[T_1][\symb{a}[T_2][\dots]]]$, where $T_0=\symb{c}$ and $T_i=\symb{b}[T_{i-1}]$ for all $i\geq 1$.
	\end{exam}

	\begin{rem*}
		An usual definition of schemes is more restrictive than ours:
		it is required that $\Rr(N)$ is a of the form $\lambda x_1\lamdots\lambda x_\args.K$, where $K$ is of sort $\otyp$ and does not use any lambda-binders.
		We do not have this requirement, so possibly $\Rr(N)$ does not start with a full sequence of lambda-binders, and possibly some lambda-binders are nested deeper in the lambda-term.
		It is, though, not difficult to convert a scheme respecting only our definition to a scheme satisfying these additional requirements (at the cost of introducing more nonterminals).
		We can, for example, use a translation between schemes and $\lambda Y$-terms from Salvati and Walukiewicz~\cite{schemes-lY}:
		their translation from schemes to $\lambda Y$-terms works well with our definition of schemes,
		while the translation from $\lambda Y$-terms to schemes produces schemes respecting the more restrictive definition.

		Another difference is that in the definition of the B\"ohm tree we allow arbitrary beta-reductions, while it is sometimes assumed that only outermost beta-reductions are allowed.
		It is a folklore that these two definitions are equivalent.

		There is one more difference: we expand a scheme to an infinite lambda-term, and then we operate on this lambda-term,
		while often finite lambda-terms containing nonterminals are considered, and appearances of nonterminals are expanded only when needed.
		This is a purely syntactical difference.
	\end{rem*}

\subsection*{\msoufin.}

	For technical convenience, we use a syntax in which there are no first-order variables.
	It is easy to translate a formula from a more standard syntax to ours (at least when the maximal arity of considered trees is fixed).
	We assume two infinite sets of variables, $\Vv^\mathsf{fin}$ and $\Vv^\mathsf{inf}$, and we let $\Vv=\Vv^\mathsf{fin}\uplus\Vv^\mathsf{inf}$.
	Variables from $\Vv^\mathsf{fin}$ are used to quantify over finite sets, while variables from $\Vv^\mathsf{inf}$ over arbitrary (potentially infinite) sets.
	In the syntax of \msoufin we have the following constructions:
	\begin{align*}
		\phi::=a(\Xvar)\mid
			 \Xvar\child_i \Yvar\mid
			 \Xvar\subseteq \Yvar\mid
			\phi_1\land\phi_2\mid
			\neg\phi'\mid
			\exists \Zvar.\phi'\mid
			\efin \Fvar.\phi'\mid
			\unbound \Fvar.\phi'
	\end{align*}
	where $a$ is a letter, and $i\in\Nat_+$, and $\Xvar,\Yvar\in\Vv$, and $\Zvar\in\Vv^\mathsf{inf}$, and $\Fvar\in\Vv^\mathsf{fin}$.
	Free variables of a formula are defined as usual; in particular $\unbound \Fvar$ is a quantifier, hence it bounds the variable $\Fvar$.
	We impose the restriction that $\unbound\Fvar.\phi'$ can be used only when all free variables of $\phi'$ are from $\Vv^\mathsf{fin}$.

	The \mso logic is defined likewise, with the exception that the \unbound quantifier is disallowed.
	(The fact that a set of tree nodes is finite is expressible in MSO without using the $\efin$ quantifier,
	thus presence of this quantifier does not change the expressive power of MSO).

	We evaluate formulae of \msoufin in $\Sigma$-labeled trees.
	In order to evaluate a formula $\phi$ in a tree $T$, we also need a \emph{valuation}, that is, a partial function $\nu$ from $\Vv$ to sets of nodes of $T$,
	such that $\nu(\Fvar)$ is finite whenever $\Fvar\in\Vv^\mathsf{fin}\cap\dom(\nu)$.
	The function should be defined at least for all free variables of $\phi$.
	The semantics is defined as follows:
	\begin{itemize}
	\item	$a(\Xvar)$ holds when every node in $\nu(\Xvar)$ is labeled by $a$,
	\item	$\Xvar\child_i \Yvar$ holds when both $\nu(\Xvar)$ and $\nu(\Yvar)$ are singletons,
		and the unique node in $\nu(\Yvar)$ is the $i$-th child of the unique node in $\nu(\Xvar)$,
	\item	$\Xvar\subseteq \Yvar$ holds when $\nu(\Xvar)\subseteq\nu(\Yvar)$,
	\item	$\phi_1\land\phi_2$ holds when both $\phi_1$ and $\phi_2$ hold,
	\item	$\neg\phi'$ holds when $\phi'$ does not hold,
	\item	$\exists\Zvar.\phi'$ holds when $\phi'$ holds for an extension of $\nu$ that maps $\Zvar$ to some set of nodes of $T$,
	\item	$\efin \Fvar.\phi'$ holds when $\phi'$ holds for an extension of $\nu$ that maps $\Fvar$ to some finite set of nodes of $T$, and
	\item	$\unbound \Fvar.\phi'$ holds when for every $n\in\Nat$,
		$\phi'$ holds for an extension of $\nu$ that maps $\Fvar$ some finite set of nodes of $T$ of cardinality at least $n$.
	\end{itemize}
	We write $T,\nu\models\phi$ to denote that $\phi$ holds in $T$ with respect to the valuation $\nu$.

	In order to see that our definition of \msoufin is not too poor, let us write a few example formulae.
	\begin{itemize}
	\item	The fact that $\Xvar$ represents an empty set can be expressed as $\mathit{empty}(\Xvar)\equiv a(\Xvar)\land b(\Xvar)$ (where $a,b$ are any two different letters).
	\item	The fact that $\Xvar$ represents a set of size at least $2$
		can be expressed as $\mathit{big}(\Xvar)\equiv\exists \Yvar.(\Yvar\subseteq \Xvar\land \neg(\Xvar\subseteq \Yvar)\land\neg\mathit{empty}(\Yvar))$.
	\item	The fact that $\Xvar$ represents a singleton can be expressed as $\mathit{sing}(\Xvar)\equiv\neg\mathit{empty}(\Xvar)\land\neg\mathit{big}(\Xvar)$.
	\item	When we only consider trees of a fixed maximal arity $\rMax$,
		the fact that $\Xvar$ and $\Yvar$ represent singletons $\{x\},\{y\}$, respectively, such that $y$ is a child of $x$ can be expressed as
		\begin{align*}
			(\Xvar\child_1 \Yvar)\lor\dots\lor(\Xvar\child_{\rMax} \Yvar)\,,
		\end{align*}
		where $\phi_1\lor\phi_2$ stands for $\neg(\neg\phi_1\land\neg\phi_2)$.
	\item	Let $A=\{a_1,\dots,a_k\}$ be a finite set of letters.
		The fact every node in the set represented by $\Xvar$ has label in $A$ can be expressed as
		\begin{align*}
			\forall\Yvar.(\mathit{sing}(\Yvar)\land\Yvar\subseteq\Xvar)\arr(a_1(\Yvar)\lor\dots\lor a_k(\Yvar))\,,
		\end{align*}
		where $\forall\Yvar.\phi$ stands for $\neg\exists\Yvar.\neg\phi$, and $\phi_1\arr\phi_2$ stands for $\neg(\phi_1\land\neg\phi_2)$.
	\end{itemize}

	Like in most logics, in \msoufin we can relativize formulae, as described by Fact~\ref{fact:relativize}.

	\begin{factfact}\label{fact:relativize}
		Let $\rMax\in\Nat$.
		For every \msoufin sentence $\phi$ we can construct an \msoufin formula $\widehat\phi(\Xvar)$ with one free variable $\Xvar$
		such that for every tree $T$ of maximal arity $\rMax$ and every valuation $\nu$,
		it holds $T,\nu\models\widehat\phi$ if and only if $\phi$ holds in $T\restr_\unode$ for every $\unode\in\nu(\Xvar)$.
	\end{factfact}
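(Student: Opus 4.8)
\emph{Proof proposal.} The plan is a routine relativization of $\phi$ to the subtrees rooted at the nodes denoted by $\Xvar$, the only nonstandard point being that we must stay inside $\msoufin$, i.e.\ respect the restriction on the $\unbound$ quantifier. Since $\phi$ is a sentence, it suffices to relativize to a single node and then quantify over the elements of $\Xvar$. Fix a variable $\Yvar\in\Vv^{\mathsf{fin}}$ not occurring in $\phi$. Using the abbreviations from the examples in the text, the child relation $\mathit{child}(\Zvar,\Zvar')$ on singletons is the \emph{finite} disjunction $(\Zvar\child_1\Zvar')\lor\dots\lor(\Zvar\child_{\rMax}\Zvar')$; this is exactly where we use that $\rMax$ is fixed in advance. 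From $\mathit{child}$ and the usual $\mso$ definition of the reflexive–transitive descendant relation, one writes a formula $\mathit{sub}(\Yvar,\Zvar)$, with free variables among $\Yvar,\Zvar$, asserting that $\mathit{sing}(\Yvar)$ holds and every node of $\Zvar$ lies in the subtree rooted at the unique node of $\Yvar$; the descendant part reads ``for every set $\Zvar''\in\Vv^{\mathsf{inf}}$ that contains the node of $\Yvar$ and is closed under $\mathit{child}$, one has $\Zvar\subseteq\Zvar''$''. The point to note is that $\Zvar''$ is \emph{bound} inside $\mathit{sub}$, so $\mathit{sub}(\Yvar,\Zvar)$ has no free variable from $\Vv^{\mathsf{inf}}$.

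Next I would define, by induction on a subformula $\psi$ of $\phi$, its relativization $\psi^{\downarrow\Yvar}$: the atomic formulas are left unchanged (once all variables denote subsets of the chosen subtree, their meaning in $T$ and in that subtree coincide); $\land$ and $\neg$ commute with $(\cdot)^{\downarrow\Yvar}$; and each quantifier is guarded by $\mathit{sub}$, that is, $(\exists\Zvar.\psi)^{\downarrow\Yvar}=\exists\Zvar.(\mathit{sub}(\Yvar,\Zvar)\land\psi^{\downarrow\Yvar})$, and likewise with $\exists$ replaced by $\efin$ or by $\unbound$. One checks that $\psi^{\downarrow\Yvar}$ is again a legal $\msoufin$ formula: relativization introduces no new free variable except that it keeps $\Yvar$, so by induction $\FV(\psi^{\downarrow\Yvar})\subseteq\FV(\psi)\cup\{\Yvar\}$, and $\Yvar\in\Vv^{\mathsf{fin}}$; hence wherever $\phi$ applies $\unbound\Fvar$ to a formula all of whose free variables are in $\Vv^{\mathsf{fin}}$, the guarded formula $\mathit{sub}(\Yvar,\Fvar)\land\psi^{\downarrow\Yvar}$ still has every free variable other than $\Fvar$ in $\Vv^{\mathsf{fin}}$ (this is where it matters that $\mathit{sub}$ has no $\Vv^{\mathsf{inf}}$ free variable). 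Correctness — that for every tree $T$ of maximal arity $\rMax$, every node $\unode$ of $T$, and every valuation $\nu$ with $\nu(\Yvar)=\{\unode\}$ which maps the free variables of $\psi$ into the subtree of $\unode$, one has $T,\nu\models\psi^{\downarrow\Yvar}$ iff $T\restr_\unode$ satisfies $\psi$ under the valuation obtained by transporting $\nu$ along the (label- and child-preserving) bijection between the nodes of $T\restr_\unode$ and the descendants-or-equal of $\unode$ in $T$ — follows by a direct induction on $\psi$: the $\mathit{sub}$-guards force quantified sets to stay within the subtree, $T\restr_\unode$ again has maximal arity $\rMax$, and the bijection preserves labels, the child relation, finiteness and cardinality, so the $\efin$ and $\unbound$ cases go through.

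Finally, since $\phi$ has no free variables, I would set $\widehat\phi(\Xvar)\equiv\forall\Yvar.\bigl((\mathit{sing}(\Yvar)\land\Yvar\subseteq\Xvar)\arr\phi^{\downarrow\Yvar}\bigr)$, with $\Yvar\in\Vv^{\mathsf{fin}}$ and with $\forall,\arr$ the derived connectives from the text; this asserts precisely that $\phi$ holds in $T\restr_\unode$ for every $\unode\in\nu(\Xvar)$, the case $\nu(\Xvar)=\emptyset$ being vacuously true on both sides. I do not expect a genuine obstacle here: the construction is entirely standard, and the two points needing attention are the ones highlighted above — definability of the child (hence descendant) relation, which forces the hypothesis that $\rMax$ is fixed, and the check that guarding an $\unbound$ quantifier by $\mathit{sub}$ does not smuggle in an infinite-set free variable, which is why $\Yvar$ is taken from $\Vv^{\mathsf{fin}}$ and why the witnessing set in $\mathit{sub}$ is kept bound.
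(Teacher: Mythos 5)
Your proposal is correct and follows essentially the same route as the paper's proof sketch: relativize each quantifier of $\phi$ to the subtree rooted at a singleton, then universally quantify over singleton subsets of $\Xvar$. You additionally make explicit two points the paper leaves implicit — that the relativizing variable must be taken from $\Vv^{\mathsf{fin}}$ (and the auxiliary descendant-closure set kept bound) so that guarding an $\unbound$-quantifier does not violate the free-variable restriction of \msoufin, and that definability of the child relation is where the fixed bound $\rMax$ is used — both of which are accurate and worth noting.
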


	\begin{proof}[Proof sketch]
		Suppose first that we want to construct a formula $\phi'(\Xvar)$ that satisfies the fact only for valuations mapping $\Xvar$ to singleton sets $\{\unode\}$.
		To this end, we need to relativize quantification in $\phi$ to the subtree starting in $\unode$.
		This means that we replace subformulae of the form $\exists\Yvar.\psi$ (and likewise $\efin\Yvar.\psi$ and $\unbound\Yvar.\psi$) by $\exists\Yvar.\eta(\Xvar,\Yvar)\land\psi$,
		where $\eta(\Xvar,\Yvar)$ says that the set represented by $\Yvar$ contains only (not necessarily proper) descendants of the node represented by $\Xvar$.

		We conclude by taking $\widehat\phi(\Xvar)\equiv\forall\Xvar'.(\mathit{sing}(\Xvar')\land\Xvar'\subseteq\Xvar)\arr\phi'(\Xvar')$,
		saying that the formula $\phi'(\Xvar')$ holds whenever $\Xvar'$ represents a singleton subset of the set represented by $\Xvar$.
	\end{proof}

\section{Nested \unbound-prefix \mso automata}\label{sec:automata}

	In this section we give a definition of nested \unbound-prefix \mso automata, a formalism equivalent to the \msoufin logic.
	These are compositions of \unbound-prefix automata and \mso automata, defined below.

	A \emph{\unbound-prefix automaton} is a tuple $\Aa=(\Sigma,Q,\QImp,\Delta)$,
	where $\Sigma$ is a finite alphabet,
	$Q$ is a finite set of states, $\QImp\subseteq Q$ is a set of \emph{important} states,
	and $\Delta\subseteq Q\times\Sigma\times (Q\cup\{\top\})^*$ is a finite transition relation (we assume $\top\not\in Q$).
	A \emph{run} of $\Aa$ on a $\Sigma$-labeled tree $T$ is a mapping $\rho$ from the set of nodes of $T$ to $Q\cup\{\top\}$ such that
	\begin{itemize}
	\item	there are only finitely many nodes $\unode$ such that $\rho(\unode)\in Q$, and
	\item	for every node $\unode$ of $T$, with label $a$ and $\nChld$ children, it holds that either $\rho(\unode)=\top=\rho(\unode 1)=\dots=\rho(\unode\nChld)$ or $(\rho(\unode),a,\rho(\unode1),\dots,\rho(\unode\nChld))\in\Delta$.
	\end{itemize}
	We use \unbound-prefix automata as transducers, relabeling nodes of $T$:
	we define $\Aa(T)$ to be the tree of the same shape as $T$, and such that its every node $\unode$ originally labeled by $a_\unode$ becomes labeled by the pair $(a_\unode,f_\unode)$,
	where $f_\unode\colon Q\to\{0,1,2\}$ is the function that assigns to every state $q\in Q$
	\begin{itemize}
	\item	$2$, if for every $n\in\Nat$ there is a run $\rho_n$ of $\Aa$ on $T\restr_\unode$ that assigns $q$ to the root of $T\restr_\unode$, and such that for at least $n$ nodes $\wnode$ it holds that $\rho_n(\wnode)\in \QImp$;
	\item	$1$, if the above does not hold, but there is a run of $\Aa$ on $T\restr_\unode$ that assigns $q$ to the root of $T\restr_\unode$;
	\item	$0$, if none of the above holds.
	\end{itemize}

	\newcommand{\qfin}{q_\mathsf{fin}}
	\newcommand{\qfb}{q_{\exists\mathit{lf}}}
	\begin{exam}\label{exa:3.1}
		Consider the \unbound-prefix automaton $\Aa_1=(\{a\},\{\qfb,\qfin\},\{\qfin\},\Delta)$,
		where $\Delta$ contains transitions
		\begin{align*}
			&(\qfin,a),(\qfin,a,\qfin),(\qfin,a,\qfin,\qfin),(\qfb,a),&&\mbox{and}\\
			&(\qfb,a,q),(\qfb,a,q,\top),(\qfb,a,\top,q)&&\mbox{for }q\in\{\qfb,\qfin\}\,.
		\end{align*}
		Suppose now that a $(\{a\},2)$-tree $T$ comes.
		When a state $\qfin$ is assigned to some node $\unode$ of $T$, then it has to be assigned as well to all descendants of $\unode$.
		Thus, there is a run of $\Aa_1$ on $T$ with state $\qfin$ in the root exactly when the tree is finite.
		This is because the definition of a run allows to assign states (other than $\top$) only to finitely many nodes of the tree.
		Going further, there is a run of $\Aa$ on $T$ with state $\qfb$ in the root exactly when there is a leaf in the tree.
		The run can assign $\qfb$ to all nodes on the branch leading to a selected leaf, and $\top$ to all other nodes.
		Alternatively, it can assign $\qfb$ to nodes on a branch leading to some node $\unode$,
		and then $\qfin$ to all descendants of $\unode$, assuming that the subtree starting in $\unode$ is finite.

		Let $B_i$ be the full binary tree of height $i$, for $i\in\Nat$.
		Let $T_1$ be the tree consisting of an infinite branch, with tree $B_i$ attached below the $i$-th child of the branch;
		that is, we take $T_i=\symb{a}[T_{i+1}][B_i]$ for $i\in\Nat_+$.
		By definition, $\Aa_1(T_1)$ has the same shape as $T_1$.
		Nodes inside all $B_i$ become relabeled to $(a,[\qfb\mapsto1,\qfin\mapsto1])$.
		This is because every subtree of every $B_i$ is finite and has a leaf.
		Moreover, the number of nodes of this subtree to which $\qfin$ is assigned is bounded by the size of the subtree
		(and hence we do not use the value $2$ in the new label).
		Nodes of the leftmost branch of $T_1$ are, in turn, relabeled to $(a,[\qfb\mapsto2,\qfin\mapsto0])$.
		The value $2$ in the $i$-th node of the branch means that for every $n\in\Nat$ there is a run $\rho_n$ on $T_i$ that assigns $\qfb$ to the root of $T_i$,
		and assigns $\qfin$ to at least $n$ nodes.
		Such a run $\rho_n$ assigns $\qfb$ on a branch entering some $B_j$ with at least $n$ nodes,
		and assigns $\qfin$ to all nodes of this~$B_j$.
	\end{exam}

	An \emph{\mso automaton} is a triple $\Aa=(\Sigma,Q,(\phi_q)_{q\in Q})$, where $\Sigma$ is a finite alphabet,
	$Q$ is a finite set of states,
	and $(\phi_q)_{q\in Q}$ is a bundle of MSO sentences indexed by elements of $Q$.
	An effect of running such an automaton $\Aa$ on a $\Sigma$-labeled tree $T$ is the tree $\Aa(T)$ that is of the same shape as $T$,
	and such that its every node $\unode$ originally labeled by $a_\unode$ becomes labeled by the pair $(a_\unode,f_\unode)$,
	where $f_\unode\colon Q\to\{0,1,2\}$ is the function that assigns to every index $q\in Q$
	\begin{itemize}
	\item	$1$ if $\phi_q$ is true in $T\restr_\unode$;
	\item	$0$ otherwise.\footnote{
			\mso automata never assign the value $2$;
			nevertheless, for uniformity between \unbound-prefix automata and \mso automata, we assume that the set of values is $\{0,1,2\}$.
		}
	\end{itemize}

	\begin{exam}
		Let $\Aa_2=(\{a\}\times\{0,1,2\}^{\{\qfb,\qfin\}},\{q_1,q_2\},(\phi_{q_1},\phi_{q_2}))$,
		where $\phi_{q_1}$ says that the second child of the root exists and is labeled by $(a,[\qfb\mapsto1,\qfin\mapsto1])$,
		and $\phi_{q_2}$ says that there exists an infinite branch with all nodes labeled by $(a,[\qfb\mapsto2,\qfin\mapsto0])$.
		Let us analyze $\Aa_2(\Aa_1(T_1))$, for the tree $T_1$ from Example~\ref{exa:3.1}.
		It has the same shape as $\Aa_1(T_1)$, and as $T_1$.
		All leaves become labeled by $((a,[\qfb\mapsto1,\qfin\mapsto1]),[q_1\mapsto0,q_2\mapsto0])$,
		other nodes of $B_i$ become labeled by $((a,[\qfb\mapsto1,\qfin\mapsto1]),[q_1\mapsto1,q_2\mapsto0])$,
		and nodes on the leftmost branch of the tree become labeled by $((a,[\qfb\mapsto2,\qfin\mapsto0]),[q_1\mapsto1,q_2\mapsto1])$.
	\end{exam}

	By the \emph{input alphabet} of $\Aa$, where $\Aa$ is either a \unbound-prefix automaton $(\Sigma,Q,\QImp,\Delta)$ or an \mso automaton $(\Sigma,Q,(\phi_q)_{q\in Q})$,
	we mean the set $\Sigmain(\Aa)=\Sigma$.
	The \emph{output alphabet} of $\Aa$ is $\Sigmaout(\Aa)=\Sigmain(\Aa)\times\{0,1,2\}^Q$.

	A \emph{nested \unbound-prefix \mso automaton} is a sequence $\Aa=\Aa_1\vartriangleright\dots\vartriangleright\Aa_k$ (with $k\geq 1$),
	where every $\Aa_i$ is either a \unbound-prefix automaton or an \mso automaton, and where $\Sigmain(\Aa_{i+1})=\Sigmaout(\Aa_i)$ for $i\in\{1,\dots,k-1\}$.
	We define $\Aa(T)$ to be $\Aa_k(\dots(\Aa_1(T))\dots)$.
	The input and output alphabets of $\Aa$, denoted $\Sigmain(\Aa)$ and $\Sigmaout(\Aa)$, equal $\Sigmain(\Aa_1)$ and $\Sigmaout(\Aa_k)$, respectively.
	The key property is that these automata can check properties expressed in \msoufin,
	as we state in Lemma~\ref{lem:logic-to-automata}, and we prove in the remainder of this section.

	\begin{lem}\label{lem:logic-to-automata}
		Let $\Sigma$ be a finite alphabet, and let $\rMax\in\Nat$.
		For every \msoufin sentence $\phi$ we can construct a nested \unbound-prefix \mso automaton $\Aa_\phi$ with $\Sigmain(\Aa_\phi)=\Sigma$,
		and a subset $\SigmaF\subseteq\Sigmaout(\Aa_\phi)$ such that for every $(\Sigma,\rMax)$-tree $T$,
		the root of $\Aa_\phi(T)$ is labeled by a letter in $\SigmaF$ if and only if $\phi$ holds in $T$.
	\end{lem}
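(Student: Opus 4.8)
The plan is to prove the lemma by induction on the structure of $\phi$, strengthening the statement so that it also covers formulae with free variables. For a formula $\phi$ whose free variables are among $\Xvar_1,\dots,\Xvar_m$ I will construct a nested \unbound-prefix \mso automaton $\Aa_\phi$ with $\Sigmain(\Aa_\phi)=\Sigma\times\{0,1\}^m$, together with a set $\SigmaF\subseteq\Sigmaout(\Aa_\phi)$, so that for every $(\Sigma,\rMax)$-tree $T$ and every valuation $\nu$ of $\Xvar_1,\dots,\Xvar_m$, writing $T_\nu$ for the tree obtained from $T$ by marking in the $j$-th extra bit the nodes of $\nu(\Xvar_j)$, the root of $\Aa_\phi(T_\nu)$ carries a label in $\SigmaF$ exactly when $T,\nu\models\phi$. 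Taking $m=0$ gives the lemma. The parameters $\Sigma$ and $\rMax$ are fixed throughout; the bound $\rMax$ is what makes the automata finite and is also required by Fact~\ref{fact:relativize}, which I use to turn questions of the form ``does this subformula hold in this subtree under the restricted valuation'' into \msoufin formulae with a single free variable. I also use two routine closure properties of the automaton class: consecutive \mso automata in a pipeline can be fused into one (their sentences speak about \mso-definable annotations, so everything collapses to \mso sentences about the original tree), and any stage can be adapted to ignore extra, irrelevant components of its input alphabet.

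For the atomic formulae $a(\Xvar)$, $\Xvar\child_i\Yvar$ and $\Xvar\subseteq\Yvar$ the statement ``$T,\nu\models\phi$'' is, read over $T_\nu$, an ordinary \mso sentence, so $\Aa_\phi$ is a single \mso automaton with one relevant state. For $\neg\phi'$ I keep $\Aa_{\phi'}$ and replace its accepting set by the complement; this is sound because $\Aa_{\phi'}(T_\nu)$ is a well-defined tree whose root label records the full truth about $\phi'$. For $\phi_1\land\phi_2$ I run the two pipelines in sequence, $\Aa_{\phi_1}\vartriangleright\Aa_{\phi_2}'$, where $\Aa_{\phi_2}'$ is $\Aa_{\phi_2}$ adapted to ignore the annotations introduced by $\Aa_{\phi_1}$; since every stage copies its input alphabet into its output, the root annotation produced by $\Aa_{\phi_1}$ survives, and the new accepting set demands that the $\Aa_{\phi_1}$-part and the $\Aa_{\phi_2}$-part of the root label both be accepting. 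The remaining \mso quantifier, $\exists\Zvar.\phi'$ with $\Zvar\in\Vv^\mathsf{inf}$, is absorbed into the \mso stages: the pipeline is organised so that a maximal \mso-context of $\phi$ --- with its \unbound-subformulae treated as fresh one-bit predicates --- is realised by a single \mso automaton placed after the stages that annotate the tree with the truth values of those \unbound-subformulae, and inside that automaton's sentences $\exists\Zvar$ is plain \mso quantification. Keeping the \mso stages and the \unbound-prefix stages in layers, rather than trying to project a bit out of a finished pipeline, is precisely what makes this work --- a naive projection is already unsound once an infinitely quantified bit is read by more than one stage.

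The interesting cases are $\efin\Fvar.\phi'$ and, above all, $\unbound\Fvar.\phi'$; by the syntactic restriction all free variables of $\phi'$ are then finite-sort. The idea is that a fresh \unbound-prefix automaton performs the quantification: a run of such an automaton keeps all but finitely many nodes in the special state $\top$, so it can guess the finite sets at stake along the finite skeleton that joins them to the root, and in the \unbound-case the nodes it guesses to lie in $\Fvar$ get important states (those in $\QImp$), so that the value $2$ at the root means exactly that $\phi'$ can be satisfied with $\Fvar$ of unbounded size. This new stage re-verifies $\phi'$ using the per-subtree annotations produced by the earlier stages of the pipeline. Making this precise is the heart of the proof and is the step I expect to be the main obstacle. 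It calls for a Feferman--Vaught-style decomposition: the truth of $\phi'$ under a finite valuation must split into, on the one hand, ``infinitary'' facts about the subtrees hanging off the finite skeleton spanned by the guessed sets --- facts that do not mention those sets and can hence be pre-computed by earlier stages --- and, on the other hand, a condition on the finite skeleton itself that a finite automaton can verify. Two points make this delicate: $\phi'$ may itself contain further \unbound and $\exists\Zvar$ quantifiers, so the decomposition has to be carried out in lockstep with the outer induction, which is exactly what forces the pipeline to alternate \unbound-prefix and \mso stages; and the finite variable being quantified must not be threaded through the earlier stages as an input bit but has to be introduced by the run of the new \unbound-prefix stage, so the outer induction is best organised by the nesting depth of \unbound, peeling off one \unbound-over-\mso layer at a time. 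Once the invariant on $\Aa_\phi$ is stated in a form robust under all these operations, the remaining cases are bookkeeping.
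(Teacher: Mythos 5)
Your high-level architecture matches the paper's: induction on the structure of $\phi$ with a strengthened statement covering free variables, a fresh \unbound-prefix stage whose important states count the guessed witnesses for $\unbound\Fvar$, and a Feferman--Vaught-style split between pre-computable facts about subtrees and a condition on a finite skeleton. But the part you defer as ``the heart of the proof'' and ``the main obstacle'' is exactly the content of the paper's argument, and your stated invariant would not carry it. Two concrete problems. First, your induction hypothesis records only whether the root label of $\Aa_{\phi'}(T_\nu)$ is accepting, i.e.\ the truth value of $\phi'$. Truth values do not compose: to decide $\unbound\Fvar.\phi'$ (or even $\phi_1\land\phi_2$ after requantification) you need, for each subtree hanging off the finite skeleton, not whether $\phi'$ holds there but its full logical type. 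The paper makes this precise by defining, for every subformula $\psi$, a finite set $\Pht{\psi}$ of \emph{phenotypes}, proving a compositionality lemma ($\Comp_{a,\nChld,\psi}$ computes the phenotype of a tree from the phenotypes of its immediate subtrees plus the root's membership data), and running the whole induction on phenotypes rather than truth values. Without this (or an equivalent type system) the ``condition on the finite skeleton that a finite automaton can verify'' cannot be written down.

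Second, encoding the valuation as marking bits in the input alphabet ($\Sigmain(\Aa_\phi)=\Sigma\times\{0,1\}^m$) puts you in a bind at every quantifier: a nested automaton is a fixed pipeline applied to one fixed tree, so it cannot range over the infinitely many trees $T_\nu$ as $\nu(\Fvar)$ varies --- which is why you are forced into the ad hoc remark that the quantified variable ``must not be threaded through the earlier stages as an input bit''. The paper avoids this by keeping the automaton independent of the valuation: $\Aa_\phi$ reads only $T$ and annotates it with phenotypes under the \emph{empty} valuation, while the dependence on $\nu$ lives in \mso formulae $\xi_{\phi,\tau}$ with free variables, evaluated on the annotated tree. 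The $\unbound$ case then goes through precisely because the syntactic restriction forces all free variables of the body to be finite-sort, so $\nu$ differs from the empty valuation only on a finite prefix, on which the \mso formula recomputes phenotypes via $\Comp$ and reads the pre-computed annotation elsewhere. Note also a small but real error: for $\efin\Fvar.\phi'$ the restriction does \emph{not} force the free variables of $\phi'$ to be finite-sort (it applies only to $\unbound$), so $\efin$ must be handled like $\exists$ inside the \mso layer, not like $\unbound$. As it stands your text is a plausible plan whose decisive steps are flagged rather than carried out.
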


	Recall that our aim is to evaluate $\phi$ in a tree $T$ generated by a recursion scheme $\Gg$, so the restriction to $(\Sigma,\rMax)$-trees is not harmful:
	as $(\Sigma,\rMax)$ we are going to take $(\Sigma_\Gg,\rMax(\Gg))$.

	It is not difficult to see that in \msoufin we can express properties checked by nested \unbound-prefix MSO automata.
	This means that the two formalisms are actually equivalent.
	Although we do not need this second direction in order to prove Theorem~\ref{thm:main}, we state it in Lemma~\ref{lem:automata-to-logic} for cognitive purposes.

	\begin{lem}\label{lem:automata-to-logic}
		Let $\rMax\in\Nat$.
		For every nested \unbound-prefix \mso automaton $\Aa$, and every letter $\eta\in\Sigmaout(\Aa)$ we can construct an \msoufin sentence $\phi_{\Aa,\eta}$
		such that for every $(\Sigmain(\Aa),\rMax)$-tree $T$, the root of $\Aa(T)$ is labeled by $\eta$ if and only if $\phi_{\Aa,\eta}$ holds in $T$.
	\end{lem}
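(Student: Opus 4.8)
The plan is to proceed by induction on the length $k$ of the nested automaton $\Aa=\Aa_1\vartriangleright\dots\vartriangleright\Aa_k$, peeling off the last component $\Aa_k$. For the base case ($k=0$, or equivalently reasoning directly about a single component) there is nothing to relabel, so $\phi_{\Aa,\eta}$ just asserts that the root carries label $\eta$, which is clearly \msoufin-expressible. For the inductive step, write $\Aa'=\Aa_1\vartriangleright\dots\vartriangleright\Aa_{k-1}$, so that $\Aa(T)=\Aa_k(\Aa'(T))$, and let $\eta=(\zeta,f)\in\Sigmaout(\Aa_k)=\Sigmain(\Aa_k)\times\{0,1,2\}^{Q_k}$, where $\zeta\in\Sigmaout(\Aa_{k-1})=\Sigmain(\Aa_k)$. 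The root of $\Aa_k(\Aa'(T))$ carries label $\eta$ iff two things hold: the root of $\Aa'(T)$ carries label $\zeta$ — handled by the induction hypothesis, giving a sentence $\phi_{\Aa',\zeta}$ — and the function $f$ is exactly the one $\Aa_k$ computes at the root when reading the tree $\Aa'(T)$.

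The crux is thus to express, inside \msoufin evaluated on $T$, the behaviour of the single automaton $\Aa_k$ on the relabelled tree $\Aa'(T)$. The key idea is that each "letter" $\sigma\in\Sigmain(\Aa_k)=\Sigmaout(\Aa_{k-1})$ is a tuple whose components encode information about $\Aa'$; by induction hypothesis there is for each such $\sigma$ a sentence saying "the root has label $\sigma$", and hence (via Fact~\ref{fact:relativize}) a formula $\psi_\sigma(\Xvar)$ with one free variable saying "every node in $\Xvar$ is a node of $\Aa'(T)$ with label $\sigma$". This lets any \msoufin formula that speaks about $\Aa'(T)$ be rewritten into one speaking about $T$: replace each atom $\sigma(\Xvar)$ by $\psi_\sigma(\Xvar)$, leaving the child and inclusion atoms untouched since $\Aa'(T)$ has the same shape as $T$. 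So it suffices to write, for a single \mso or \unbound-prefix automaton, an \msoufin formula over $(\Sigmain(\Aa_k),\rMax)$-trees saying that the root gets value function $f$.

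For an \mso automaton $\Aa_k=(\Sigma,Q,(\phi_q)_{q\in Q})$ this is immediate: the root gets value $f$ iff for each $q\in Q$ with $f(q)=1$ the \mso sentence $\phi_q$ holds, and for each $q$ with $f(q)=0$ it fails (and $f$ never takes value $2$); one simply conjoins $\phi_q$ or $\neg\phi_q$ accordingly, and \mso sentences are a fortiori \msoufin sentences. For a \unbound-prefix automaton $\Aa_k=(\Sigma,Q,\QImp,\Delta)$ the work is to express, for each $q\in Q$: (i) existence of a run assigning $q$ to the root — a finite set $\Yvar$ of "active" nodes (those not labelled $\top$), with a finite partition of $\Yvar$ by state, forming a subtree rooted at the root, and respecting $\Delta$ at every active node while all nodes strictly below an inactive node are inactive; all quantifiers here range over $\Vv^\mathsf{fin}$, so this is \wmsou-expressible, hence \msoufin-expressible; and (ii) the "value $2$" condition — that such runs exist with arbitrarily many nodes in $\QImp$ — which is exactly one application of the \unbound quantifier to the finite-set formula from (i), intersected with the set of $\QImp$-nodes; the side condition that $\unbound$ may only bind a formula all of whose free variables are finite is met because the entire run-description uses only finite-set variables. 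The value of $f(q)$ is then: $2$ if (ii) holds; $1$ if (ii) fails but (i) holds; $0$ otherwise — a Boolean combination. The main obstacle is bookkeeping: correctly encoding a run as a tuple of finite sets and writing the $\Delta$-consistency clause (which must range over the at most $\rMax$ children of each active node, using the fixed-arity child atoms) so that it genuinely characterises runs, and — more subtly — making sure the translation-via-$\psi_\sigma$ step in the inductive case does not accidentally widen the range of a $\Vv^\mathsf{fin}$ variable, so that the \unbound-restriction is preserved through the induction; this holds because $\psi_\sigma(\Xvar)$ has a single free variable $\Xvar$, which inherits the finiteness class of the atom it replaces, and because by construction the formulas built in (i)--(ii) keep all their bound variables finite.
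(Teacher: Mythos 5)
Your proposal is correct and follows essentially the same route as the paper's proof sketch: handle a single \mso component by conjoining the $\phi_q$ or their negations, handle a single \unbound-prefix component by describing a run with finite-set quantifiers and using one \unbound quantifier for the value-$2$ condition, and handle composition by relativizing the letter-predicates of the inner automaton (Fact~\ref{fact:relativize}) and substituting them for the atoms $a(\Xvar)$ in the outer formula. The only differences are cosmetic (you peel off the last component rather than splitting generically), and your explicit check that the substitution and the run encoding preserve the finiteness restriction on \unbound is a welcome extra detail.
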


	\begin{proof}[Proof sketch]
		When $\Aa$ is a single \mso automaton, $\Aa=(\Sigma,Q,(\psi_q)_{q\in Q})$, it is straightforward to construct $\phi_{\Aa,\eta}$ in question.
		Namely, when $\eta=(a,f)$, as $\phi_{\Aa,\eta}$ we take
		\begin{align*}
			\xi_a\land\bigwedge_{q:f(q)=1}\psi_q\land\bigwedge_{q:f(q)=0}\neg\psi_q\,,
		\end{align*}
		where $\xi_a$ says that the root is labeled by $a$.

		It is also not difficult to deal with a single \unbound-prefix automaton.
		Indeed, it is standard to express in \mso that a run of an automaton exists.
		The fact that there exist runs with arbitrarily many important states is expressed using the \unbound quantifier.

		It remains to simulate composition of automata.
		Suppose that $\Aa=\Aa_1\vartriangleright\Aa_2$ (where $\Aa_1,\Aa_2$ may be nested again),
		and that we already have sentences $\psi_{\Aa_1,a}$ corresponding to $\Aa_1$ for $a\in\Sigmaout(\Aa_1)=\Sigmain(\Aa_2)$,
		and $\psi_{\Aa_2,\eta}$ corresponding to $\Aa_2$.
		Out of every sentence $\psi_{\Aa_1,a}$ we construct a formula $\widehat\psi_{\Aa_1,a}(\Zvar)$ saying that $\psi_{\Aa_1,a}$
		holds in all subtrees starting in elements of the set represented by $\Zvar$ (cf.~Fact~\ref{fact:relativize}).
		The formula $\psi_{\Aa_2,\eta}$ is evaluated in $\Aa_1(T)$, while the formula $\phi_{\Aa,\eta}$ that we are going to construct is evaluated in $T$.
		Thus, whenever $\psi_{\Aa_2,\eta}$ uses $a(\Zvar)$ for some letter $a\in\Sigmain(\Aa_2)$ and some variable $\Zvar$,
		in $\phi_{\Aa,\eta}$ we replace it by $\widehat\psi_{\Aa_1,a}(\Zvar)$.
	\end{proof}

	We now come to the proof of Lemma~\ref{lem:logic-to-automata}.
	We notice that due to the nested structure, our automata are quite close to the logic.
	It is clear that \mso automata can simulate all of \mso.
	On the other hand, $\unbound$-prefix automata check whether something is unbounded, which corresponds to \unbound quantifiers.
	As states of the $\unbound$-prefix automata we take \emph{phenotypes} (aka.\ logical types), which are defined next.

	Let $\phi$ be a formula of \msoufin, let $T$ be a tree, and let $\nu$ be a valuation (defined at least for all free variables of $\phi$).
	We define the \emph{$\phi$-phenotype} of $T$ under valuation $\nu$, denoted $\pht{\phi}{T}{\nu}$, by induction on the size of $\phi$ as follows:
	\begin{itemize}
	\item	if $\phi$ is of the form $a(\Xvar)$ (for some letter $a$) or $\Xvar\subseteq\Yvar$ then $\pht{\phi}{T}{\nu}$ is the logical value of $\phi$ in $T,\nu$, that is,
		$\true$ if $T,\nu\models\phi$ and $\false$ otherwise,
	\item	if $\phi$ is of the form $\Xvar\child_i \Yvar$, then $\pht{\phi}{T}{\nu}$ equals
		\begin{itemize}
		\item	$\true$ if $T,\nu\models\phi$,
		\item	$\mathsf{empty}$ if $\nu(\Xvar)=\nu(\Yvar)=\emptyset$,
		\item	$\mathsf{root}$ if $\nu(\Xvar)=\emptyset$ and $\nu(\Yvar)=\{\epsilon\}$, and
		\item	$\false$ otherwise,
		\end{itemize}
	\item	if $\phi\equiv(\psi_1\land\psi_2)$, then $\pht{\phi}{T}{\nu}=(\pht{\psi_1}{T}{\nu},\pht{\psi_2}{T}{\nu})$,
	\item	if $\phi\equiv(\neg\psi)$, then $\pht{\phi}{T}{\nu}=\pht{\psi}{T}{\nu}$,
	\item	if $\phi\equiv\exists\Xvar.\psi$, then
		\begin{align*}
			\pht{\phi}{T}{\nu}=\{\sigma\mid\exists X.\pht{\psi}{T}{\nu[\Xvar\mapsto X]}=\sigma\}\,,
		\end{align*}
	\item	if $\phi\equiv\efin \Xvar.\psi$, then
		\begin{align*}
			\pht{\phi}{T}{\nu}=\{\sigma\mid\exists X.\pht{\psi}{T}{\nu[\Xvar\mapsto X]}=\sigma\land |X|<\infty\}\,,\tag*{\mbox{and}}
		\end{align*}
	\item	if $\phi\equiv\unbound \Xvar.\psi$, then
		\begin{align*}
			\pht{\phi}{T}{\nu}=(&\{\sigma\mid\exists X.\pht{\psi}{T}{\nu[\Xvar\mapsto X]}=\sigma\land |X|<\infty\},\\
				&\{\sigma\mid\forall n\in\Nat.\exists X.\pht{\psi}{T}{\nu[\Xvar\mapsto X]}=\sigma\land n\leq|X|<\infty\})\,,
		\end{align*}
	\end{itemize}
	where $X$ ranges over sets of nodes of $T$.

	For each $\phi$, let $\Pht\phi$ denote the set of all potential $\phi$-phenotypes.
	Namely, $\Pht\phi=\{\true,\false\}$ in the first case,
	$\Pht\phi=\{\true, \mathsf{empty}, \mathsf{root}, \false\}$ in the second case,
	$\Pht\phi=\Pht{\psi_1}\times\Pht{\psi_2}$ in the third case,
	$\Pht\phi=\Pht{\psi}$ in the fourth case,
	$\Pht\phi=\Pp(\Pht\psi)$ in the fifth and sixth case, and
	$\Pht\phi=(\Pp(\Pht\psi))^2$ in the last case.

	We immediately see two facts.
	First, $\Pht\phi$ is finite for every $\phi$.
	Second, the fact whether $\phi$ holds in $T,\nu$ is determined by $\pht{\phi}{T}{\nu}$.
	This means that there is a function $\mathit{tv}_\phi\colon\Pht\phi\to\{\true,\false\}$ such that $\mathit{tv}_\phi(\pht{\phi}{T}{\nu})=\true$ if and only if $T,\nu\models
	\phi$.\label{page:tv}

	Next, we observe that phenotypes behave in a compositional way, as formalized below.
	Here for a valuation $\nu$ and a node $\unode$, by $\nu\restr_\unode$ we mean the valuation that restricts $\nu$ to the subtree starting at $\unode$,
	that is, maps every variable $\Xvar\in\dom(\nu)$ to $\{\wnode\mid\unode\wnode\in\nu(\Xvar)\}$.

	\begin{lemC}[\cite{wmso+u-kaiser,wmso-model}]\label{lem:compositionality}
		For every letter $a$, every $\nChld\in\Nat$, and every \msoufin formula $\phi$,
		one can compute a function $\Comp_{a,\nChld,\phi}\colon\Ppfin(\Vv)\times(\Pht\phi)^\nChld\to\Pht\phi$ such that
		for every tree $T$ whose root has label $a$ and $\nChld$ children, and for every valuation $\nu$,
		\begin{align*}
			\pht{\phi}{T}{\nu}=\Comp_{a,\nChld,\phi}(\{\Xvar\in\dom(\nu)\mid\epsilon\in\nu(\Xvar)\},\pht{\phi}{T\restr_1}{\nu\restr_1},\dots,\pht{\phi}{T\restr_\nChld}{\nu\restr_\nChld})\,.
		\end{align*}
	\end{lemC}

	\begin{proof}
		We proceed by induction on the size of $\phi$.

		When $\phi$ is of the form $b(\Xvar)$ or $\Xvar\subseteq \Yvar$, then we see that $\phi$ holds in $T,\nu$ if and only if it holds in every subtree $T\restr_i,\nu\restr_i$ and in the root of $T$.
		Thus, for $\phi\equiv b(\Xvar)$ as $\Comp_{a,\nChld,\phi}(R,\tau_1,\dots,\tau_\nChld)$ we take $\true$ when $\tau_i=\true$ for all $i\in\{1,\dots,\nChld\}$ and either $a=b$ or $\Xvar\not\in R$.
		For $\phi\equiv(\Xvar\subseteq \Yvar)$ the last part of the condition is replaced by ``if $\Xvar\in R$ then $\Yvar\in R$''.

		Next, suppose that $\phi\equiv(\Xvar\child_k \Yvar)$.
		Then as $\Comp_{a,\nChld,\phi}(R,\tau_1,\dots,\tau_\nChld)$ we take
		\begin{itemize}
		\item	$\true$ if $\tau_j=\true$ for some $j\in\{1,\dots,\nChld\}$, and $\tau_i=\mathsf{empty}$ for all $i\in\{1,\dots,\nChld\}\setminus\{j\}$, and $\Xvar\not\in R$, and $\Yvar\not\in R$,
		\item	$\true$ also if $\tau_k=\mathsf{root}$, and $\tau_i=\mathsf{empty}$ for all $i\in\{1,\dots,\nChld\}\setminus\{k\}$, and $\Xvar\in R$, and $\Yvar\not\in R$,
		\item	$\mathsf{empty}$ if $\tau_i=\mathsf{empty}$ for all $i\in\{1,\dots,\nChld\}$, and $\Xvar\not\in R$, and $\Yvar\not\in R$,
		\item	$\mathsf{root}$ if $\tau_i=\mathsf{empty}$ for all $i\in\{1,\dots,\nChld\}$, and $\Xvar\not\in R$, and $\Yvar\in R$, and
		\item	$\false$ otherwise.
		\end{itemize}
		By comparing this definition with the definition of the phenotype we immediately see that the thesis is satisfied.

		When $\phi\equiv(\neg\psi)$, we simply take $\Comp_{a,\nChld,\phi}=\Comp_{a,\nChld,\psi}$, and when $\phi\equiv(\psi_1\land\psi_2)$, as $\Comp_{a,\nChld,\phi}(R,(\tau_1^1,\tau_1^2),\dots,(\tau^1_\nChld,\tau^2_\nChld))$ we take the pair of $\Comp_{a,\nChld,\psi_i}(R,\tau_1^i,\dots,\tau^i_\nChld)$ for $i\in\{1,2\}$.

		Suppose now that $\phi\equiv\exists \Xvar.\psi$ or $\phi\equiv\efin \Xvar.\psi$.
		As $\Comp_{a,\nChld,\phi}(R,\tau_1,\dots,\tau_\nChld)$ we take
		\begin{align*}
			&\{\Comp_{a,\nChld,\psi}(R\cup\{\Xvar\},\sigma_1,\dots,\sigma_\nChld),\Comp_{a,\nChld,\psi}(R\setminus\{\Xvar\},\sigma_1,\dots,\sigma_\nChld)\\
				&\hspace{20em}\mid(\sigma_1,\dots,\sigma_\nChld)\in \tau_1\times\dots\times\tau_\nChld\}\,.
		\end{align*}
		The two possibilities, $R\cup\{\Xvar\}$ and $R\setminus\{\Xvar\}$, correspond to the fact that when quantifying over $\Xvar$,
		the root of $T$ may be either taken to the set represented by $\Xvar$ or not.
		Notice that the cases of $\exists\Xvar$ and $\efin\Xvar$ are handled in the same way:
		for a local behavior near the root it does not matter whether we quantify over all sets or only over finite sets.

		Finally, suppose that $\phi\equiv\unbound \Xvar.\psi$.
		The arguments of $\Comp_{a,\nChld,\phi}$ are pairs $(\tau_1,\rho_1),\dots,\allowbreak(\tau_\nChld,\rho_\nChld)$.
		Let $A$ be the set of tuples $(\sigma_1,\dots,\sigma_\nChld)\in\tau_1\times\dots\times\tau_\nChld$,
		and let $B$ be the set of tuples $(\sigma_1,\dots,\sigma_\nChld)$ such that $\sigma_j\in\rho_j$ for some $j\in\{1,\dots,\nChld\}$ and $\sigma_i\in\tau_i$ for all $i\in\{1,\dots,\nChld\}\setminus\{j\}$.
		As $\Comp_{a,\nChld,\phi}(R,(\tau_1,\rho_1),\dots,(\tau_\nChld,\rho_\nChld))$ we take
		\begin{align*}
			(&\{\Comp_{a,\nChld,\psi}(R\cup\{\Xvar\},\sigma_1,\dots,\sigma_\nChld),\Comp_{a,\nChld,\psi}(R\setminus\{\Xvar\},\sigma_1,\dots,\sigma_\nChld)\mid(\sigma_1,\dots,\sigma_\nChld)\in A\},\\
			 &\{\Comp_{a,\nChld,\psi}(R\cup\{\Xvar\},\sigma_1,\dots,\sigma_\nChld),\Comp_{a,\nChld,\psi}(R\setminus\{\Xvar\},\sigma_1,\dots,\sigma_\nChld)\mid(\sigma_1,\dots,\sigma_\nChld)\in B\})\,.
		\end{align*}
		The first coordinate is defined as for the existential quantifiers.
		The second coordinate is computed correctly due to the pigeonhole principle: if for every $n$ we have a set $X_n$ of cardinality at least $n$ (satisfying some property),
		then we can choose an infinite subsequence of these sets such that either the root belongs to all of them or to none of them,
		and one can choose some $j\in\{1,\dots,\nChld\}$ such that the sets contain unboundedly many descendants of~$j$.
	\end{proof}

	In order to prove Lemma~\ref{lem:logic-to-automata} by induction on the structure of the sentence $\phi$, we need to generalize it a bit; this is done in Lemma~\ref{lem:logic-to-automata-aux}.
	In particular, we need to use phenotypes, instead of the truth value of the sentence (because phenotypes are compositional, unlike truth values).
	We also need to allow formulae with free variables, not just sentences, as well as arbitrary valuations.
	A special role is played by the valuation $\VALempty$ that maps every variable to the empty set; for this valuation we have a stronger version of the lemma.

	\begin{lem}\label{lem:logic-to-automata-aux}
		Let $\Sigma$ be a finite alphabet, and let $\rMax\in\Nat$.
		Then for every \msoufin formula $\phi$ we can construct
		\begin{enumerate}
		\item	a nested \unbound-prefix \mso automaton $\Aa_\phi$ with $\Sigmain(\Aa_\phi)=\Sigma$, and \mso formulae $\xi_{\phi,\tau}$ for all $\tau\in\Pht{\phi}$,
			such that for every $(\Sigma,\rMax)$-tree $T$, every valuation $\nu$ in $T$, and every $\tau\in\Pht{\phi}$
			it holds $\Aa_\phi(T),\nu\models\xi_{\phi,\tau}$ if and only if $\pht{\phi}{T}{\nu}=\tau$, and
		\item	a nested \unbound-prefix \mso automaton $\Bb_\phi$ with $\Sigmain(\Bb_\phi)=\Sigma$, and a function $f_\phi\colon\allowbreak\Sigmaout(\Bb_\phi)\to\Pht{\phi}$,
			such that for every $(\Sigma,\rMax)$-tree $T$ the root of $\Bb_\phi(T)$ is labeled by a letter $\eta$ such that $f_\phi(\eta)=\pht{\phi}{T}{\VALempty}$.
		\end{enumerate}
	\end{lem}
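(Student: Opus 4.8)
The plan is to proceed by induction on the structure of $\phi$, building parts (1) and (2) simultaneously, with part (2) being the ``anchoring'' statement used at each step to feed the $\unbound$ case. First I would observe that part (1) for $\phi$ easily gives part (2): given $\Aa_\phi$ and the formulae $\xi_{\phi,\tau}$, append one more $\mathsf{MSO}$ automaton $\Cc$ whose state set is $\Pht\phi$ and whose sentence for state $\tau$ is $\xi_{\phi,\tau}$ evaluated at the root; then $\Bb_\phi=\Aa_\phi\vartriangleright\Cc$ and $f_\phi$ reads off from $\Sigmaout(\Bb_\phi)$ the unique $\tau$ with the $\tau$-bit set to $1$. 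The stronger, empty-valuation form in part (2) is automatic because $\xi_{\phi,\tau}$ is required to be correct for \emph{every} valuation, in particular $\VALempty$ — here one uses that ``$\Xvar$ represents the empty set'' is $\mathsf{MSO}$-definable, so $\Cc$ can hard-code $\nu=\VALempty$ if one prefers a direct construction.

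For the inductive step of part (1), the atomic cases $a(\Xvar)$, $\Xvar\subseteq\Yvar$, $\Xvar\child_i\Yvar$ are handled by taking $\Aa_\phi$ to be the empty pipeline (so $\Aa_\phi(T)=T$) and letting $\xi_{\phi,\tau}$ be the direct $\mathsf{MSO}$ definition of ``$\pht{\phi}{T}{\nu}=\tau$'', which for these formulae is a Boolean/child condition on $\nu$ and the labels, hence plainly $\mathsf{MSO}$. For $\neg\psi$ we reuse $\Aa_\psi$ and note $\Pht{\neg\psi}=\Pht\psi$, so $\xi_{\neg\psi,\tau}=\xi_{\psi,\tau}$. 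For $\psi_1\wedge\psi_2$ I would run the two pipelines in sequence; concretely, relabel so that the alphabet carries the $\Sigmaout$-components of both $\Aa_{\psi_1}$ and $\Aa_{\psi_2}$ (this is routine: one can first run $\Aa_{\psi_1}$, then run $\Aa_{\psi_2}$ on the underlying $\Sigma$-tree ignoring the extra coordinate, using Fact~\ref{fact:relativize}-style bookkeeping to recover the $\Sigma$-labels), and set $\xi_{(\psi_1\wedge\psi_2),(\tau_1,\tau_2)}=\xi_{\psi_1,\tau_1}\wedge\xi_{\psi_2,\tau_2}$, using $\Pht{\psi_1\wedge\psi_2}=\Pht{\psi_1}\times\Pht{\psi_2}$. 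For $\exists\Zvar.\psi$ and $\efin\Fvar.\psi$ the pipeline is unchanged, $\Aa_\phi=\Aa_\psi$; since $\Pht\phi=\Pp(\Pht\psi)$, I take $\xi_{\phi,S}$ to be the $\mathsf{MSO}$ formula saying ``for each $\sigma\in S$ there is a set $X$ (finite, in the $\efin$ case) with $\xi_{\psi,\sigma}$ holding under $\nu[\Zvar\mapsto X]$, and for each $\sigma\notin S$ no such $X$ exists'' — finiteness of a set being $\mathsf{MSO}$-expressible, and $\Pht\psi$ finite, this is a genuine $\mathsf{MSO}$ sentence over the alphabet $\Sigmaout(\Aa_\psi)$.

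The real work is the $\unbound\Fvar.\psi$ case, and I expect this to be the main obstacle. Here the restriction that all free variables of $\psi$ lie in $\Vv^\mathsf{fin}$ is crucial: it forces the relevant sets to be \emph{finite}, so counting them is meaningful and is exactly what $\unbound$-prefix automata measure. The first coordinate of $\pht{\unbound\Fvar.\psi}{T}{\nu}$ is just the $\efin$-phenotype, already handled. For the second coordinate I would build $\Aa_\phi$ by appending a $\unbound$-prefix automaton $\Aa$ to $\Aa_\psi$: a run of $\Aa$ will guess a set $X$ assigning the state $\qfin$-like marker to the (finitely many) nodes of $X$ and propagating through an $\mathsf{MSO}$-type state that tracks $\pht{\psi}{T}{\nu[\Fvar\mapsto X]}$ bottom-up via $\Comp_{a,\nChld,\psi}$ from Lemma~\ref{lem:compositionality}, with the ``important'' states being those in $X$; then the value-$2$ verdict of $\Aa$ on $T\restr_\unode$ at target phenotype-state $\sigma$ says precisely that there are arbitrarily large finite $X$ with $\pht{\psi}{T}{\nu[\Fvar\mapsto X]}=\sigma$. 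A subtlety is that $\psi$ may itself contain $\unbound$-quantifiers, so $\pht{\psi}{T}{\cdot}$ is not directly $\mathsf{MSO}$; this is why the induction gives us $\Aa_\psi$ together with the $\xi_{\psi,\sigma}$, and $\Aa$ is run \emph{after} $\Aa_\psi$ so that the phenotype-components it needs are already present as letters — the bottom-up state of $\Aa$ then only has to apply the computable function $\Comp$ to already-available data, which is a finite-state, hence legitimate, transition relation. Finally $\xi_{\phi,(S_1,S_2)}$ is the $\mathsf{MSO}$ formula over $\Sigmaout(\Aa_\phi)$ asserting that, reading off the $\{0,1,2\}^Q$-labels produced by $\Aa$, the set of phenotype-states attaining value $\ge 1$ is $S_1$ and the set attaining value $2$ is $S_2$; correctness follows by matching this against the defining clause for $\pht{\unbound\Fvar.\psi}{T}{\nu}$ and invoking the pigeonhole argument already used in Lemma~\ref{lem:compositionality}.
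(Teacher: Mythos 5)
Your plan matches the paper's overall strategy for most cases (deriving Item~(2) from Item~(1) by hard-coding the empty valuation, the atomic/Boolean/$\exists$/$\efin$ cases, and the use of a $\unbound$-prefix layer driven by $\Comp$ for the $\unbound$ quantifier), but there is a genuine gap in the $\unbound$ case, and it is exactly the crux of the lemma. The $\unbound$-prefix automaton you append is a valuation-blind transducer: the label it writes at a node $\unode$ depends only on $T\restr_\unode$. Its runs can simulate the choice of the quantified set $X$ (via which nodes get an important state), and it can read $\psi$-phenotypes of deeper subtrees from letters only if those phenotypes have been written into the labels --- which the induction provides only for the \emph{empty} valuation (this is Item~(2), i.e.\ $\Bb_\psi$; the output of $\Aa_\psi$ from Item~(1) does \emph{not} carry phenotypes as letters, it only supports the formulae $\xi_{\psi,\sigma}$, which need the valuation as input). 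Consequently the composed automaton computes $\pht{\unbound\Fvar.\psi}{T\restr_\unode}{\VALempty}$ at each node, so ``reading off the labels at the root'' proves Item~(2) for $\phi$, not Item~(1): for a nonempty valuation $\nu$ of the free variables of $\phi$, the value $\pht{\phi}{T}{\nu}$ is simply not encoded in the labels, because the bottom-up $\Comp$-propagation inside the automaton uses $R\in\{\emptyset,\{\Fvar\}\}$ rather than the sets $R$ determined by $\nu$.

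The missing idea is the step recovering Item~(1) from Item~(2) in this case, and it is where the syntactic restriction on $\unbound$ is really used (not, as you suggest, to make $|X|$ finite --- that holds by the semantics of $\unbound$ anyway). Since all free variables of $\unbound\Fvar.\psi$ lie in $\Vv^\mathsf{fin}$, any admissible $\nu$ has support contained in a finite prefix of $T$. One therefore writes $\xi_{\phi,\tau}$ as an \mso formula that existentially quantifies over sets $\Xvar_\rho$ (one per $\rho\in\Pht\phi$), intended to collect the nodes $\unode$ with $\pht{\phi}{T\restr_\unode}{\nu\restr_\unode}=\rho$, and checks: (i) at every node, local consistency of the guessed phenotypes with those of the children via $\Comp_{\cdot,\cdot,\phi}$, using the set $R$ of free variables of $\phi$ whose $\nu$-value contains the node; (ii) at every node none of whose descendants lies in any $\nu(\Yvar)$, agreement of the guessed phenotype with $f_\phi(\eta)$ read from the automaton's label $\eta$; and (iii) that the root lies in $\Xvar_\tau$. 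Correctness is then an induction on the number of descendants in the support of $\nu$. Without this patching argument your $\xi_{\phi,(S_1,S_2)}$ is only correct for $\nu=\VALempty$, and the induction cannot continue (e.g.\ the $\exists$ and $\wedge$ cases above a $\unbound$ subformula need Item~(1) for that subformula under arbitrary valuations).
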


	\begin{proof}
		Induction on the size of $\phi$.
		We start by observing how Item (2) follows from Item (1).
		Item (1) gives us an automaton $\Aa_\phi$ and \mso formulae $\xi_{\phi,\tau}$ for all $\tau\in\Pht{\phi}$.
		We change these formulae into sequences $\xi_{\phi,\tau}'$, assuming that all their free variables are valuated to the empty set.
		More precisely, for every free variable $\Xvar$ we change subformulae of $\xi_{\phi,\tau}$ of the form $a(\Xvar)$ and $\Xvar\subseteq\Yvar$ into $\true$,
		subformulae of the form $\Xvar\child_i \Yvar$ and $\Yvar\child_i \Xvar$ into $\false$,
		and subformulae of the form $\Yvar\subseteq\Xvar$, where $\Yvar$ is a bound variable, into formulae checking that the set represented by $\Yvar$ is empty.
		Then, we take $\Bb_\phi=\Aa_\phi\vartriangleright\Cc$ for $\Cc=(\Sigmaout(\Aa_\phi),\Pht{\phi},(\xi_{\phi,\tau}')_{\tau\in\Pht{\phi}})$.
		If $\eta=(a,h)$ for a function $h$ mapping exactly one phenotype $\tau$ to $1$, we define $f_\phi(\eta)$ to be this phenotype $\tau$,
		and for $\eta=(a,h)$ with $|h^{-1}(1)|\neq1$ we define $f_\phi(\eta)$ arbitrarily.

		Consider now a $(\Sigma,\rMax)$-tree $T$.
		By Item (1), for $\tau=\pht{\phi}{T}{\VALempty}$ we have $\Aa_\phi(T),\VALempty\models\xi_{\phi,\tau}$ (equivalently, $\Aa_\phi(T)\models\xi_{\phi,\tau}'$)
		and for $\tau\in\Pht{\phi}\setminus\{\pht{\phi}{T}{\VALempty}\}$ we have $\Aa_\phi(T),\VALempty\not\models\xi_{\phi,\tau}$ (equivalently, $\Aa_\phi(T)\not\models\xi_{\phi,\tau}'$).
		It follows that the root of $\Bb_\phi(T)$ is labeled by $\eta=(a,h)$ where $a$ is the label of the root in $\Aa_\phi(T)$, and $h(\pht{\phi}{T}{\VALempty})=1$,
		and $h(\tau)=0$ for $\tau\in\Pht{\phi}\setminus\{\pht{\phi}{T}{\VALempty}\}$.
		Then $f_\phi(\eta)=\pht{\phi}{T}{\VALempty}$, as required.

		We now come to the proof of Item (1), where we proceed by case distinction.
		When $\phi$ is an atomic formula, that is, equals $a(\Xvar)$, $\Xvar\subseteq\Yvar$, or $\Xvar\child_i\Yvar$, then the automaton $\Aa_\phi$ is not needed:
		as $\Aa_\phi$ we can take the MSO automaton with empty set of states (and input alphabet $\Sigma$).
		For such an automaton we have that $\Aa_\phi(T)=T$ for every $(\Sigma,\rMax)$-tree $T$.
		As $\xi_{\phi,\true}$ we take $\phi$.
		When $\phi$ equals $a(\Xvar)$ or $\Xvar\subseteq\Yvar$, the only phenotypes are $\true$ and $\false$, and thus we take $\xi_{\phi,\false}\equiv\neg\phi$.
		In the case of $\phi\equiv\Xvar\child_i\Yvar$, the situation when the formula is false is divided into three phenotypes: $\mathsf{empty}$, $\mathsf{root}$, and $\false$.
		Nevertheless, it is easy to express in \mso that we have a particular phenotype, following the definition of $\pht{\phi}{T}{\nu}$.

		Suppose now that $\phi$ is of the form $\psi_1\land\psi_2$.
		From the induction assumption, Item (1), we have two automata, $\Aa_{\psi_1}$ and $\Aa_{\psi_2}$,
		as well as formulae $\xi_{\psi_1,\tau_1}$ for all $\tau_1\in\Pht{\psi_1}$ and $\xi_{\psi_2,\tau_2}$ for all $\tau_2\in\Pht{\psi_2}$.
		We combine the two automata into a single automaton $\Aa_\phi$ with $\Sigmain(\Aa_\phi)=\Sigma$.
		More precisely, we take $\Aa_\phi=\Aa_{\psi_1}\vartriangleright\Aa_{\psi_2}'$, where $\Aa_{\psi_2}'$ works exactly like $\Aa_{\psi_2}$,
		but instead of reading a tree $T$ over alphabet $\Sigma$, it reads the tree $\Aa_{\psi_1}(T)$ and ignores the part of its labels added by $\Aa_{\psi_1}$.
		We also amend $\xi_{\psi_1,\tau_1}$ and $\xi_{\psi_2,\tau_2}$ so that they can read the output of $\Aa_\phi$:
		formulae $\xi_{\psi_1,\tau_1}'$ work like $\xi_{\psi_1,\tau_1}$ but ignore the parts of labels added by $\Aa_{\psi_2}'$,
		and formulae $\xi_{\psi_2,\tau_2}'$ work like $\xi_{\psi_2,\tau_2}$ but ignore the parts of labels added by $\Aa_{\psi_1}$.
		Having the above, for every $\tau=(\tau_1,\tau_2)\in\Pht{\phi}$ we take $\xi_{\phi,\tau}\equiv\xi_{\psi_1,\tau_1}'\land\xi_{\psi_2,\tau_2}'$.
		Because a tree has $\phi$-phenotype $\tau$ when it has $\psi_1$-phenotype $\tau_1$ and simultaneously $\psi_2$-phenotype $\tau_2$,
		it should be clear that the thesis of Item (1) becomes satisfied.

		When $\phi$ is of the form $\neg\psi$, or $\exists\Xvar.\psi$, or $\efin\Xvar.\psi$, as $\Aa_\phi$ we take $\Aa_\psi$ existing by the induction assumption, Item (1).
		The induction assumption gives us also formulae $\xi_{\psi,\tau}$ for all $\tau\in\Pht\psi$.
		If $\phi\equiv\neg\psi$, we take $\xi_{\phi,\tau}\equiv\xi_{\psi,\tau}$.
		If $\phi\equiv\exists\Xvar.\psi$, we take
		\begin{align*}
			\xi_{\phi,\tau}\equiv\bigwedge_{\sigma\in\tau}(\exists\Xvar.\xi_{\psi,\sigma})\land\bigwedge_{\sigma\in\Pht{\psi}\setminus\tau}(\neg\exists\Xvar.\xi_{\psi,\sigma})\,.
		\end{align*}
		If $\phi\equiv\efin\Xvar.\psi$, we take the same formula, but with $\efin$ quantifiers instead of $\exists$.

		Finally, suppose that $\phi\equiv\unbound \Xvar.\psi$.
		We cannot proceed like in the previous cases, because the \unbound quantifier cannot be expressed in \mso;
		we rather need to append a new \unbound-prefix automaton at the end of the constructed automaton.
		In this case we first prove Item (2), and then we deduce Item (1) out of Item (2).
		By Item (2) of the induction assumption we have an automaton $\Bb_\psi$ and a function $f_\psi\colon\Sigmaout(\Bb_\psi)\to\Pht{\psi}$ such that
		for every node $\unode$ of $T$, the root of $\Bb_\psi(T\restr_\unode)$ is labeled by a letter $\eta_\unode$ such that $f_\psi(\eta_\unode)=\pht{\psi}{T\restr_\unode}{\VALempty}$.
		Moreover, there is a function $g\colon\Sigmaout(\Bb_\psi)\to\Sigma$ such that $g(\eta_\unode)$ is the original label of $\unode$ in $T$
		(such a function exists, because the labels from $T$ remain as a part of the labels in $\Bb_\psi(T)$).
		Recall that $\Bb_\psi(T)$ has the same shape as $T$, and actually $(\Bb_\psi(T))\restr_\unode=\Bb_\psi(T\restr_\unode)$ for every node $\unode$.
		We construct a new layer $\Cc$, which calculates $\phi$-phenotypes basing on $\psi$-phenotypes, and we take $\Bb_\phi=\Bb_\psi\vartriangleright\Cc$.
		As the state set of $\Cc$ we take $Q=\{0,1\}\times\Pht{\psi}$; states from $\{1\}\times\Pht{\psi}$ are considered as important.
		Transitions are determined by the $\Comp$ predicate from Lemma~\ref{lem:compositionality}.
		More precisely, for every $\nChld\leq\rMax$, every $\eta\in\Sigmaout(\Bb_\psi)$, and all $((i_1,\sigma_1),\dots,(i_\nChld,\sigma_\nChld))\in Q^\nChld$ we have transitions
		\begin{align*}
			&((0,\Comp_{g(\eta),\nChld,\psi}(\emptyset,\sigma_1,\dots,\sigma_\nChld)),\eta,(i_1,\sigma_1),\dots,(i_\nChld,\sigma_\nChld))\,,&\mbox{and}\\
			&((1,\Comp_{g(\eta),\nChld,\psi}(\{\Xvar\},\sigma_1,\dots,\sigma_\nChld)),\eta,(i_1,\sigma_1),\dots,(i_\nChld,\sigma_\nChld))\,.
		\end{align*}
		Moreover, we have transitions that read the $\psi$-phenotype from the label:
		\begin{align*}
			&((0,f_\psi(\eta)),\eta,\underbrace{\top,\dots,\top}_\nChld)&\mbox{for }\nChld\leq\rMax.
		\end{align*}
		We notice that there is a direct correspondence between runs of $\Cc$ and choices of a set of nodes $X$ to which the variable $\Xvar$ is mapped.
		The first coordinate of the state is set to $1$ in nodes chosen to belong to the set $X$.
		The second coordinate contains the $\psi$-phenotype under the valuation mapping $\Xvar$ to $X$ and every other variable to the empty set.
		In some nodes below the chosen set $X$ we use transitions of the second kind, reading the $\psi$-phenotype from the label;
		it does not matter in which nodes this is done, as everywhere a correct $\psi$-phenotype is written.
		The fact that we quantify only over finite sets $X$ corresponds to the fact that the run of $\Cc$ can assign non-$\top$ states only to a finite prefix of the tree.
		Moreover, the cardinality of $X$ is reflected by the number of important states assigned by a run.
		It follows that for every $\sigma\in\Pht{\psi}$,
		\begin{itemize}
		\item	there exists a finite set $X$ of nodes of $T$ such that $\pht{\psi}{T}{\VALempty[\Xvar\mapsto X]}=\sigma$ if and only if for some $i\in\{0,1\}$ there is a run of $\Cc$ on $\Bb_\psi(T)$ that assigns $(i,\sigma)$ to the root, and
		\item	for every $n\in\Nat$ there exists a finite set $X_n$ of nodes of $T$ such that $\pht{\psi}{T}{\VALempty[\Xvar\mapsto X_n]}=\sigma$ and $|X_n|\geq n$
			if and only if for some $i\in\{0,1\}$ and for every $n\in\Nat$ there is a run $\rho_n$ of $\Cc$ on $\Bb_\psi(T)$ that assigns $(i,\sigma)$ to the root,
			and such that $\rho_n$ assigns an important state to at least $n$ nodes.
		\end{itemize}
		Thus, looking at the root's label in $\Bb_\phi(T)$ we can determine $\pht{\phi}{T}{\VALempty}$.
		This finishes the proof of Item (2) in the case of the \unbound quantifier.

		Next, still supposing that $\phi\equiv\unbound \Xvar.\psi$, we prove Item (1) using Item (2), which is already proved.
		It is essential that, by the definition of the \msoufin logic, all free variables of $\phi$ come from $\Vv^\mathsf{fin}$, that is, represent finite sets.
		This means that only nodes from a finite prefix of a considered tree can belong to $\nu(\Yvar)$ for $\Yvar$ free in $\phi$ (since clearly the number of free variables is finite).
		Outside of this finite prefix we can read the $\phi$-phenotype from the output of $\Bb_\phi$ (because the valuation is empty there),
		and in the finite prefix we can compute them using the $\Comp$ function.

		More precisely, as $\Aa_\phi$ we take $\Bb_\phi$, coming from Item (2).
		Item (2) gives us a function $f_\phi\colon\Sigmaout(\Aa_\phi)\to\Pht\phi$ reading the $\phi$-phenotype of a $(\Sigma,\rMax)$-tree $T$ out of the root label of $\Aa_\phi(T)$;
		we also have a function $g\colon\Sigmaout(\Aa_\phi)\to\Sigma$ that extracts original labels out of labels in $\Aa_\phi(T)$.
		For every $\tau\in\Pht\phi$ we define the formula $\xi_{\psi,\tau}$ as follows.
		It starts with a sequence of $|\Pht{\phi}|$ existential quantifiers,
		quantifying over variables $\Xvar_\rho$ for all $\rho\in\Pht{\phi}$.
		The intention is that, in a tree $T$, every $\Xvar_\rho$ represents the set of nodes $\unode$ such that $\pht{\phi}{T\restr_\unode}{\nu\restr_\unode}=\rho$.
		Inside the quantification we say that
		\begin{itemize}
		\item	the sets represented by these variables are disjoint, and every node belongs to some of them,
		\item	the root belongs to $\Xvar_\tau$,
		\item	if a node with label $\eta\in\Sigmaout(\Aa_\phi)$ belongs to $\Xvar_\rho$, and its children belong to $\Xvar_{\rho_1},\dots,\Xvar_{\rho_r}$, respectively
			(where $r\leq\rMax$), and $R$ is the set of free variables $\Yvar$ of $\phi$ for which the node belongs to $\nu(\Yvar)$, then
			$\rho=\Comp_{g(\eta),r,\phi}(R,\rho_1,\dots,\rho_r)$
			(there are only finitely many possibilities for $\rho,\rho_1,\dots,\rho_r\in\Pht\phi$, for $r\in\{0,\dots,\rMax\}$, for $\eta\in\Sigmaout(\Aa_\phi)$,
			and finitely many free variables of $\phi$, thus the constructed formula can be just a big alternative listing all possible cases), and
		\item	if a node with label $\eta\in\Sigmaout(\Aa_\phi)$ belongs to $\Xvar_\rho$ and none of $\nu(\Yvar)$ for $\Yvar$ free in $\phi$ contains this node or some its descendant,
			then $\rho=f_\phi(\eta)$.
		\end{itemize}
		Consider now a $(\Sigma,\rMax)$-tree $T$, and a valuation $\nu$ in this tree.
		If $\pht{\phi}{T}{\nu}=\tau$, then we can show that $\xi_{\phi,\tau}$ is true
		by taking for $\Xvar_\rho$ the set of nodes $\unode$ for which $\pht{\phi}{T\restr_\unode}{\nu\restr_\unode}=\rho$ (for every $\rho\in\Pht{\phi}$).
		Conversely, suppose that $\xi_{\phi,\tau}$ is true.
		Then we can prove that a node $\unode$ can belong to the set represented by $\Xvar_\rho$ (for $\rho\in\Pht{\phi}$) only when $\pht{\phi}{T\restr_\unode}{\nu\restr_\unode}=\rho$.
		The proof is by a straightforward induction on the number of descendants of $\unode$ that belong to $\nu(\Yvar)$ for some $\Yvar$ free in $\phi$;
		we use Lemma~\ref{lem:compositionality} for the induction step.
	\end{proof}

	Now the proof of Lemma~\ref{lem:logic-to-automata} follows easily.
	Indeed, when $\phi$ is a sentence (has no free variables), $\pht{\phi}{T}{\VALempty}$ determines whether $\phi$ holds in $T$.
	Thus, it is enough to take the automaton $\Bb_\phi$ constructed in Lemma~\ref{lem:logic-to-automata-aux},
	and replace the function $f_\phi$ by the set $\SigmaF=\{\eta\in\Sigmaout(\Aa)\mid\mathit{tv}_\phi(f_\phi(\eta))\}$
	(where $\mathit{tv}_\phi$, defined on page~\pageref{page:tv}, given a $\phi$-phenotype says whether $\phi$ holds in trees having this $\phi$-phenotype).

	We remark that the \wmsou logic (which is weaker than \msoufin) corresponds to nested \unbound-prefix automata, composed of \unbound-prefix automata only (i.e., not using \mso automata).
	Indeed, \mso automata are needed only to deal with infinite sets;
	when all quantified sets are finite, we can simulate all the constructs using \unbound-prefix automata~\cite{wmsou-schemes}.

	We also remark that Bojańczyk and Toruńczyk~\cite{wmso+u-trees} introduce another model of automata equivalent to \wmsou: nested limsup automata.
	A common property of these two models is that both of them are nested; the components of nested limsup automata are of a different form, though.

\section{Proof of the main theorem}\label{sec:main-thm}

	In this section we prove our main theorem---Theorem~\ref{thm:main}.
	To this end, we have to recall three properties of recursion schemes: logical reflection (Fact~\ref{fact:mso-reflection}), SUP reflection (Fact~\ref{fact:sup-reflection}),
	and closure under composition with finite tree transducers (Fact~\ref{fact:transducer}).

	The property of logical reflection for schemes comes from Broadbent, Carayol, Ong, and Serre~\cite{reflection}.
	They state it for sentences of $\mu$-calculus, but $\mu$-calculus and \mso are equivalent over infinite trees~\cite{EmersonJutla}.

	\begin{factfact}[logical reflection {\cite[Theorem 2(ii)]{reflection}}]\label{fact:mso-reflection}
		For every \mso sentence $\phi$ and every scheme $\Gg$ generating a tree $T$
		one can construct a scheme $\Gg_\phi$ that generates a tree of the same shape as $T$, and such that its every node $\unode$ is labeled by a pair $(a_\unode,b_\unode)$,
		where $a_\unode$ is the label of $\unode$ in $T$, and $b_\unode$ is $\true$ if $\phi$ is satisfied in $T\restr_\unode$ and $\false$ otherwise.
	\newqed\end{factfact}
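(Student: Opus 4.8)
Since \mso and the modal $\mu$-calculus are equivalent over infinite trees~\cite{EmersonJutla}, and the latter is captured by alternating parity tree automata, fix an alternating parity tree automaton $\Bb$ with state set $Q$ and initial state $q_0$ that recognises exactly the $(\Sigma_\Gg,\rMax(\Gg))$-trees satisfying $\phi$. For a node $\unode$ of $T$ put $S_\unode=\{q\in Q\mid\Bb\text{ started in }q\text{ accepts }T\restr_\unode\}$; then the bit to be attached to $\unode$ is precisely ``$q_0\in S_\unode$''. It therefore suffices to construct a scheme generating the tree obtained from $T$ by relabelling every $\unode$ with $(a_\unode,S_\unode)$; projecting the second component onto the bit $q_0\in S_\unode$ is then trivial. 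Relabelling preserves the shape of the tree, as required, so the only real task is to make the map $\unode\mapsto S_\unode$ definable on top of $\Gg$.

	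One route uses collapsible pushdown automata. By the equivalence with schemes~\cite{collapsible}, take a CPDA $\Aa$ generating $T$ and form the acceptance parity game between $\Aa$ and $\Bb$: its positions are pairs of a configuration of $\Aa$ and a state of $\Bb$, and the verifier wins from $(c,q)$ exactly when $\Bb$ started in $q$ accepts the subtree unfolded from $c$. Deciding this game gives the winner of every position, but we need the stronger fact that the family of winning regions $(W_q)_{q\in Q}$ is \emph{recognisable by the CPDA}: $\Aa$ can be enriched with a bounded amount of extra memory so that, at the moment it emits the letter of a node $\unode$, it already knows $S_\unode=\{q\mid c_\unode\in W_q\}$. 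Given this, the enriched CPDA emits $(a_\unode,S_\unode)$ at every node, and translating it back into a scheme~\cite{collapsible} yields $\Gg_\phi$.

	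The route I would actually follow is type-theoretic, and closer to the rest of this paper. One takes a finitary model of the simply-typed $\lambda$-calculus that refines $\Bb$ (equivalently, a system of Kobayashi--Ong intersection types~\cite{KobayashiOngtypes,ModelSM}): each sort $\alpha$ gets a finite set of refinement types, and each closed ground-sort lambda-term $\tL$ a value $\sem{\tL}$ that compositionally determines the set of states of $\Bb$ accepting $\BT(\tL)$. The scheme $\Gg_\phi$ is then, in effect, the product of $\Gg$ with this model: its nonterminals are the pairs $(N,\tau)$ with $N\in\Nn$ and $\tau$ a refinement of the sort of $N$ (finitely many, so $\Gg_\phi$ is a genuine scheme), and the rule for $(N,\tau)$ is obtained from $\Rr(N)$ by a type-directed rewriting which replaces every node constructor $\symb{a}[\tL_1][\dots][\tL_r]$ by $\symb{(a,b)}[\tL_1'][\dots][\tL_r']$---where $b$ is read off from $a$ together with the refinement values carried for $\tL_1,\dots,\tL_r$, and each $\tL_i'$ is the recursively rewritten subterm---and which otherwise threads refinement information through applications, $\lambda$-binders and nonterminal occurrences in the standard way. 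It is convenient to first invoke Fact~\ref{fact:convergent} so that $\Lambda(\Gg)$ is fully convergent, which removes the need to treat $\omega$-labelled (divergent) nodes separately.

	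The transformation itself is bookkeeping; the crux is its correctness. In the automata route this is the recognisability of the parity-game winning region by a CPDA---the one genuinely hard lemma, established by analysing how the parity acceptance condition interacts with the higher-order stack operations. In the type route the crux is soundness and completeness of the finite model, together with invariance of the refinement values under $\beta$-reduction and their stability under the infinite unfolding performed by the scheme, so that the value recorded in the nonterminal $(N,\tau)$ of $\Gg_\phi$ equals $\sem{\Lambda_\Gg(N)}$, the value of the subterm it stands for. Granting this invariance, a coinductive comparison of $\BT(\Lambda(\Gg_\phi))$ with the relabelling of $T=\BT(\Lambda(\Gg))$ node by node finishes the proof.
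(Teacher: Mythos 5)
The paper does not prove this statement at all: it imports it verbatim as Fact~\ref{fact:mso-reflection} from Broadbent, Carayol, Ong, and Serre~\cite{reflection}, adding only the remark that $\mu$-calculus and \mso are equivalent over infinite trees~\cite{EmersonJutla}. So your sketch is being compared against a citation rather than an in-paper argument. Your first route --- translate $\phi$ into an alternating parity tree automaton, form the acceptance game against a collapsible pushdown automaton generating $T$, and enrich that CPDA so that when it emits the letter of a node $\unode$ it already knows the set of automaton states accepting $T\restr_\unode$ --- is precisely the proof of the cited Theorem~2(ii), so at that level you match the source exactly. Your second, type-theoretic route (product of $\Gg$ with a finitary refinement-type model in the style of~\cite{KobayashiOngtypes,ModelSM}) is a genuinely different but equally standard proof; it buys a construction that never leaves the scheme formalism and avoids parity games on collapsible pushdown graphs, at the price of having to establish soundness and completeness of the type system and the invariance of the semantic values under $\beta$-reduction and infinite unfolding. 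In both routes you correctly locate the one genuinely hard lemma (recognisability of the winning region by the CPDA, respectively correctness of the finite model) and then defer it; that is acceptable for a result the paper itself only cites, but it does mean your text is a roadmap to the known proofs rather than a self-contained argument.
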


	The SUP reflection is the heart of our proof.
	In order to talk about this property, we need a few more definitions.
	By $\#_a(U)$ we denote the number of $a$-labeled nodes in a (finite) tree $U$.
	For a set of (finite) trees $\Ll$ and a set of symbols $A$, we define a predicate $\SUP_A(\Ll)$, which holds if for every $n\in\Nat$ there is some $U_n\in \Ll$ such that for all $a\in A$ it holds that $\#_a(U_n)\geq n$.

	Originally, in the simultaneous unboundedness problem we consider nondeterministic higher-order recursion schemes, which instead of generating a single infinite tree, recognize a set of finite trees.
	We use here an equivalent formulation, in which the set of finite trees is encoded in a single infinite tree.
	To this end, we use a special letter $\nd$, denoting a nondeterministic choice.
	We write $T\rednd U$ if $U$ is obtained from $T$ by choosing some $\nd$-labeled node $\unode$ and some its child $\vnode$, and attaching $T\restr_\vnode$ in place of $T\restr_\unode$.
	In other words, $\rednd$ is the smallest relation such that $\symb{\nd}[T_1][\dots][T_\nChld]\rednd T_j$ for $j\in\{1,\dots,\nChld\}$,
	and if $T_j\rednd T_j'$ for some $j\in\{1,\dots,\nChld\}$, and $T_i=T_i'$ for all $i\in\{1,\dots,\nChld\}\setminus\{j\}$,
	then $\symb{a}[T_1][\dots][T_\nChld]\rednd \symb{a}[T_1'][\dots][T_\nChld']$.
	For a tree $T$, $\Ll(T)$ is the set of all finite trees $U$ such that $\#_\nd(U)=\#_\omega(U)=0$ and $T\rednd^* U$.

	\begin{factfact}[SUP reflection~{\cite[Theorem 10.1]{types-diagonal-journal}}]\label{fact:sup-reflection}
		For every scheme $\Gg$ generating a tree $T$ one can construct a scheme $\Gg_\mathit{SUP}$ that generates a tree of the same shape as $T$,
		and such that its every node $\unode$, having in $T$ label $a_\unode$, is labeled by
		\begin{itemize}
		\item	a pair $(a_\unode,\{A\subseteq\Sigma_\Gg\mid \SUP_A(\Ll(T\restr_\unode))\})$, if $a_\unode\neq\nd$, and
		\item	the letter $\nd$, if $a_\unode=\nd$.
		\newqed\end{itemize}
	\end{factfact}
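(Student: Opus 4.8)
The plan is to reduce SUP reflection to two separable ingredients: a finitary, compositional characterization of the predicate $\SUP_A$ via an intersection-type system, and a product construction paralleling the one behind Fact~\ref{fact:mso-reflection}. First I would design, for each fixed target set $A\subseteq\Sigma_\Gg$, an intersection-type system $\vdash_A$ over the (finitely many) sorts occurring in $\Gg$, in which ground-type judgements carry quantitative annotations recording, for each letter of $A$, whether its occurrences can be driven to infinity; the nondeterminism letter $\nd$ is interpreted as a choice (a derivation selects one child) while $\omega$ is interpreted as failure. Higher types are intersections of such annotated ground types, so that---crucially---finiteness of the sort set forces the set of relevant types to be finite as well. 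The $\nd$/$\omega$ handling is tailored so that a derivation of a suitable ``unbounded'' ground typing for a closed term $\tK$ corresponds to a sequence $U_n\in\Ll(\BT(\tK))$ witnessing $\SUP_A$.

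The central lemma I would then establish is soundness and completeness of this system at the root: for every closed ground-type $\tK$, the predicate $\SUP_A(\Ll(\BT(\tK)))$ holds if and only if $\tK$ admits the designated unbounded typing. Soundness (typing yields witnesses) is a routine induction reading off lower bounds on letter counts from a derivation. Completeness is the hard direction and, I expect, the main obstacle: given that arbitrarily letter-rich trees $U_n$ exist, one must extract a single finite derivation that can be \emph{pumped} along the recursion to reproduce those counts. This demands a pigeonhole/pumping argument over the fixpoint unfolding of $\Gg$ that is genuinely higher-order, because a subterm of high sort may be duplicated and the quantities it contributes must be tracked through application; reconciling simultaneous unboundedness of several letters across such duplication is exactly the technical core that makes the diagonal problem nontrivial. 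Running $\vdash_A$ for all $A\subseteq\Sigma_\Gg$ at once (finitely many choices) packages the type information into a single finitary, compositional abstraction $\sem{\cdot}$ of terms.

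With the abstraction in hand, reflection follows by a product construction. The point is that $\{A\mid\SUP_A(\Ll(\BT(\tK)))\}$ factors through $\sem{\tK}$, and $\sem{\cdot}$ is compositional, so $\sem{\tK\,\tL}$ and $\sem{\lambda x.\tK}$ are computable from the semantics of the parts, and the value at a node is determined by the semantics of the residual subterm generating the subtree below it. I would therefore build $\Gg_\mathit{SUP}$ whose nonterminals carry, alongside the original data, the finite annotation of the subterm they stand for, propagate it through application and abstraction by the compositional rules, and whose node constructors emit at each $a_\unode\neq\nd$ the label $(a_\unode,\{A\mid\SUP_A(\Ll(T\restr_\unode))\})$ read off from that annotation, leaving $\nd$-nodes untouched. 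Since the annotation ranges over a finite set, the resulting scheme stays finite and preserves the generated shape. The only delicate bookkeeping is that the emitted value must refer to the residual subterm rooted at $\unode$ rather than to an entire head-normal form; this is handled by arranging the annotation to be precisely $\sem{\cdot}$ of that residual subterm, which the product scheme tracks by construction.
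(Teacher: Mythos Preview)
The paper does not prove this statement at all: Fact~\ref{fact:sup-reflection} is stated with a citation to \cite[Theorem~10.1]{types-diagonal-journal} and closed with a \textsc{qed} symbol, with no argument given. The introduction even says explicitly that the justification of SUP reflection, present in the conference version, ``was already expanded in another paper~\cite{types-diagonal-journal}, and for this reason we do not include it here.'' So there is nothing in this paper to compare your proposal against; any proof you supply is necessarily going beyond what the paper does.

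That said, your sketch is broadly aligned with the approach of the cited source: an intersection-type system that tracks quantitative information about letter counts, soundness/completeness of a distinguished ``unbounded'' typing versus $\SUP_A$, and then a product construction on the scheme using the finiteness of the type space. Two cautions if you intend to flesh it out. First, the hard direction (completeness) is, as you acknowledge, the crux, and your description of it as a ``pigeonhole/pumping argument over the fixpoint unfolding'' is too vague to be a proof; the actual argument in \cite{types-diagonal-journal} is substantial and not something one can hand-wave through. Second, the reflection step is more delicate than a generic product construction: the annotation you need at a node $\unode$ is the SUP information for $\Ll(T\restr_\unode)$, which depends on the \emph{B\"ohm tree} of the residual term, not directly on a compositional semantics of syntactic subterms; making $\sem{\cdot}$ respect $\beta$-reduction so that the product scheme really emits the correct value at each node requires care (and is handled in the cited paper). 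Your last paragraph gestures at this issue but does not resolve it.
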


	The third recalled fact (Fact~\ref{fact:transducer}) talks about finite tree transducers.
	A \emph{(deterministic, top-down) finite tree transducer} is a tuple $\Tt=(\Sigma,\rMax,Q,\qIni,\delta)$, where
	$\Sigma$ is a finite alphabet, $\rMax$ is the maximal arity of considered trees, $Q$ is a finite set of states, $\qIni\in Q$ is an initial state,
	and $\delta$ is a transition function mapping $Q\times\Sigma\times\{0,\dots,\rMax\}$ to finite lambda-terms.
	A triple $(q,a,\nChld)$ should be mapped by $\delta$ to a term that uses only node constructors and variables of the form $\xVar_{i,p}$, where $i\in\{1,\dots,\nChld\}$ and $p\in Q$ (applications and lambda-binders are not allowed);
	at least one node constructor has to be used (the whole $\delta(q,a,\nChld)$ cannot be equal to a variable).

	For a $(\Sigma,\rMax)$-tree $T$ and a state $q\in Q$, we define $\Tt_q(T)$ by coinduction, as follows:
	if $T=\symb{a}[T_1][\dots][T_\nChld]$, then $\Tt_q(T)$ is the tree obtained from $\delta(q,a,\nChld)$ by substituting $\Tt_p(T_i)$ for the variable $\xVar_{i,p}$, for all $i\in\{1,\dots,\nChld\}$ and $p\in Q$.
	In the root we start from the initial state, that is, we define $\Tt(T)=\Tt_{\qIni}(T)$.
	We have the following fact.

	\begin{factfact}\label{fact:transducer}
		For every finite tree transducer $\Tt=(\Sigma,\rMax,Q,\qIni,\delta)$ and every scheme $\Gg$ generating a $(\Sigma,\rMax)$-tree $T$,
		one can construct a scheme $\Gg_\Tt$ that generates the tree $\Tt(T)$.
	\newqed\end{factfact}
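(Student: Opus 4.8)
The plan is to build $\Gg_\Tt$ by a direct syntactic transformation of the infinitary lambda-term $\Lambda(\Gg)$, composed back into a scheme; the route via collapsible pushdown automata alluded to in the introduction is an alternative, sketched at the end. First I would invoke Fact~\ref{fact:convergent} to assume that $\Lambda(\Gg)$ is fully convergent, so that every node of $T=\BT(\Lambda(\Gg))$ is literally placed by a node constructor occurring in a reduct of $\Lambda(\Gg)$; this will be exactly what makes the divergent case below vacuous. Fix an enumeration $Q=\{q_1,\dots,q_m\}$.

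The heart of the argument is a translation $(\cdot)^\sharp$ on sorts and lambda-terms. On sorts, $\otyp^\sharp=(\underbrace{\otyp\arr\dots\arr\otyp}_{m}\arr\otyp)\arr\otyp$ and $(\alpha\arr\beta)^\sharp=\alpha^\sharp\arr\beta^\sharp$; the intention is that a ground-sort subterm, which in $\Lambda(\Gg)$ eventually exposes a single subtree $U$, is replaced by a Church-style $m$-tuple encoding $\Tt_{q_1}(U),\dots,\Tt_{q_m}(U)$. Write $\langle\tL_{q_1},\dots,\tL_{q_m}\rangle\equiv\lambda f.f\,\tL_{q_1}\dots\tL_{q_m}$ for tupling and, for $p=q_j$, $\pi_p\equiv\lambda t.t\,(\lambda x_1\lamdots\lambda x_m.x_j)$ for the $p$-th projection; these are closed finite lambda-terms, and $\pi_p\langle\tL_{q_1},\dots,\tL_{q_m}\rangle\redb^*\tL_p$. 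The translation $\tK\mapsto\tK^\sharp$ is then defined on the (co)inductive structure of $\tK$: a variable $x^\alpha$ becomes a variable $x^{\alpha^\sharp}$, applications and lambda-binders are translated homomorphically, and a node constructor is translated by splicing in the transducer's output,
\[
\big(\symb{a}[\tK_1][\dots][\tK_\nChld]\big)^\sharp\ \equiv\ \big\langle\,\delta(q,a,\nChld)\big[\xVar_{i,p}:=\pi_p(\tK_i^\sharp)\big]\,\big\rangle_{q\in Q}\ ,
\]
where in the $q$-th component we substitute, for every $i\le\nChld$ and $p\in Q$, the term $\pi_p(\tK_i^\sharp)$ for each leaf $\xVar_{i,p}$ of the finite tree $\delta(q,a,\nChld)$. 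Only finitely many triples $(q,a,\nChld)$ with $a\in\Sigma$ and $\nChld\le\rMax$ occur, each $\delta(q,a,\nChld)$ is a finite tree with a node constructor at its root, and $\tK^\sharp$ still has only finitely many sorts among its subterms, so it is a legal lambda-term.

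Next I would let $\Gg_\Tt$ have nonterminals $\{N^\sharp:N\in\Nn\}$ together with a fresh starting nonterminal $\NIni'$ of sort $\otyp$, with rules $\Rr_\Tt(N^\sharp)=\Rr(N)^\sharp$ and $\Rr_\Tt(\NIni')=\pi_{\qIni}(\NIni^\sharp)$, and check the side conditions in the definition of a scheme. Since $(\cdot)^\sharp$ commutes with substitution of nonterminals, a routine coinduction gives $\Lambda_{\Gg_\Tt}(\tK^\sharp)=(\Lambda_\Gg(\tK))^\sharp$ for every $\tK$ with free variables in $\Nn$, hence $\Lambda(\Gg_\Tt)=\pi_{\qIni}\big((\Lambda(\Gg))^\sharp\big)$. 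It then remains to establish
\[
\BT\big(\pi_q(M^\sharp)\big)=\Tt_q\big(\BT(M)\big)\qquad\text{for every closed lambda-term }M\text{ of sort }\otyp\text{ and every }q\in Q,
\]
and instantiate it with $M=\Lambda(\Gg)$, $q=\qIni$. I would prove this equality by coinduction on the B\"ohm tree, using: $(\cdot)^\sharp$ commutes with $\beta$-reduction; if $M\redb^*\symb{a}[\tK_1][\dots][\tK_\nChld]$ then $\pi_q(M^\sharp)\redb^*\delta(q,a,\nChld)[\xVar_{i,p}:=\pi_p(\tK_i^\sharp)]$, whose B\"ohm tree is the finite tree $\delta(q,a,\nChld)$ with $\BT(\pi_p(\tK_i^\sharp))$ plugged in at each $\xVar_{i,p}$-leaf --- which, by the coinductive hypothesis applied to the $\tK_i$ (and a straightforward induction through the finite tree $\delta(q,a,\nChld)$), is exactly what $\Tt_q$ outputs on a node labelled $a$ with subtrees $\BT(\tK_1),\dots,\BT(\tK_\nChld)$; and if $M$ has no head normal form then neither does $\pi_q(M^\sharp)$.

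I expect the main obstacle to be precisely this last equality: coordinating the splicing of the finite $\delta$-outputs with the interleaved $\beta$-reductions, and in particular the divergent case, where both sides would be $\symb{\omega}$ whereas $\Tt_q(\symb{\omega})=\delta(q,\omega,0)$ --- so the equality would fail outright. This is exactly why the initial reduction to a fully convergent $\Lambda(\Gg)$ (full convergence propagates to every closed ground-sort subterm met along the spine, so none of them diverges) is essential: with it, the divergent case of the coinduction never arises. Finally, a more black-box proof --- the one the introduction relies on --- converts $\Gg$ into an equivalent collapsible pushdown automaton generating $T$~\cite{collapsible}, runs a copy of $\Tt$'s finite control alongside it (emitting the finite tree $\delta(q,a,\nChld)$, and spawning the $i$-th child computation in each state $p$ at which $\xVar_{i,p}$ occurs in $\delta(q,a,\nChld)$, whenever the automaton outputs a node labelled $a$ of rank $\nChld$ while in transducer state $q$), yielding an automaton that generates $\Tt(T)$, and converts back to a scheme~\cite{collapsible}.
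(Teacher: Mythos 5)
Your main construction is correct, and it takes a genuinely different route from the paper's. The paper proves this fact entirely on the other side of the scheme/CPS correspondence: it converts $\Gg$ into a fully convergent collapsible pushdown system, builds a product system whose states are pairs of a CPS state and a transducer state (with auxiliary states for the subterms of the finite trees $\delta(q,a,\nChld)$ being emitted), verifies by coinduction that this product generates $\Tt_p(T_{q,s})$ from $((q,p),s)$, and converts back to a scheme --- i.e., exactly the ``black-box'' alternative you sketch in your final paragraph. Your primary argument instead stays inside the lambda-calculus: the Church-tuple translation $\otyp^\sharp=(\otyp\arr\dots\arr\otyp\arr\otyp)\arr\otyp$, homomorphic on applications, abstractions and variables, with node constructors replaced by the tuple of spliced-in $\delta$-outputs, is sound; the sort discipline works out, finiteness of the rules and of the set of sorts is preserved, commutation with substitution and $\beta$-reduction holds, and the coinductive identity $\BT(\pi_q(M^\sharp))=\Tt_q(\BT(M))$ is productive because each $\delta(q,a,\nChld)$ has a node constructor at its root. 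You also correctly isolate the one place where the argument would break --- a ground subterm with no head normal form, where $\Tt_q(\symb{\omega})=\delta(q,\omega,0)$ but $\pi_q(M^\sharp)$ would still diverge --- and repair it with Fact~\ref{fact:convergent}, which is precisely the same reduction the paper performs before composing the CPS with $\Tt$. What the two approaches buy: the paper's proof is shorter because it delegates all the coinductive bookkeeping to the known scheme/CPS equivalence and only has to interleave two finite-state devices; yours is self-contained modulo Fact~\ref{fact:convergent} and avoids invoking collapsible pushdown systems at all, at the mild cost of raising the order of every sort by two and of having to carry out the $\beta$-reduction/splicing bookkeeping explicitly. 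Either is an acceptable proof of the fact.
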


	This fact follows from the equivalence between schemes and collapsible pushdown systems~\cite{collapsible}, as it is straightforward to compose a collapsible pushdown system with $\Tt$
	(where due to Fact~\ref{fact:convergent} we can assume that $\Lambda(\Gg)$ is fully convergent, i.e., that every node of $T$ is explicitly generated by the collapsible pushdown system).
	Since we are not aware of any proof of this fact in the literature, we give more details in Appendix~\ref{app:transducer}.

	Using Fact~\ref{fact:mso-reflection} we can compose schemes with \mso automata, as stated below.

	\begin{lem}\label{lem:mso-and-schemes}
		For every \mso automaton $\Aa$ and every scheme $\Gg$ generating a tree $T$, where $\Sigma_\Gg\subseteq\Sigmain(\Aa)$, one can construct a scheme $\Gg_\Aa$ that generates the tree $\Aa(T)$.
	\end{lem}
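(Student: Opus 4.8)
The plan is to reduce the composition of a scheme with an \mso automaton to the composition of a scheme with a finite tree transducer, using logical reflection (Fact~\ref{fact:mso-reflection}) to compute the automaton's labels. Let $\Aa=(\Sigma,Q,(\phi_q)_{q\in Q})$ with $Q=\{q_1,\dots,q_m\}$, and let $\Gg$ generate $T$ over $\Sigma_\Gg\subseteq\Sigma$. Recall that the label of a node $\unode$ in $\Aa(T)$ is the pair $(a_\unode,f_\unode)$, where $f_\unode(q_j)=1$ if $\phi_{q_j}$ holds in $T\restr_\unode$ and $0$ otherwise.

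\emph{Step 1: Annotate with the truth values of all $\phi_{q_j}$.} I would apply Fact~\ref{fact:mso-reflection} iteratively, once for each $j\in\{1,\dots,m\}$. Starting from $\Gg_0=\Gg$, given $\Gg_{j-1}$ generating a tree whose nodes carry as label a tuple recording the original $\Sigma_\Gg$-label together with the truth values of $\phi_{q_1},\dots,\phi_{q_{j-1}}$ in the corresponding subtree, I apply logical reflection for the sentence $\phi_{q_j}$. One subtlety: $\phi_{q_j}$ is written to speak about $\Sigma$-labeled trees, whereas $\Gg_{j-1}$ now produces trees over an enriched alphabet; but since the enriched label still determines the original $\Sigma_\Gg$-label (via a projection), one can mechanically rewrite $\phi_{q_j}$ into an equivalent \mso sentence $\phi_{q_j}'$ over the enriched alphabet — replace each atomic label test $a(\Xvar)$ by the disjunction of the enriched-label tests projecting to $a$. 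Reflection for $\phi_{q_j}'$ on $\Gg_{j-1}$ yields $\Gg_j$ generating a tree of the same shape, whose every node now additionally carries the truth value of $\phi_{q_j}$ in its subtree. Moreover, truth of $\phi_{q_j}$ in $T\restr_\unode$ coincides with truth of $\phi_{q_j}'$ in the corresponding enriched subtree, because the projections to $\Sigma_\Gg$ at every node agree by construction. After $m$ steps, $\Gg_m$ generates a tree of the shape of $T$ in which every node $\unode$ is labeled by $(a_\unode, b_1,\dots,b_m)$ with $b_j$ the truth value of $\phi_{q_j}$ in $T\restr_\unode$.

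\emph{Step 2: Reformat the labels by a finite tree transducer.} The label $(a_\unode,b_1,\dots,b_m)$ produced by $\Gg_m$ carries exactly the same information as the label $(a_\unode,f_\unode)$ of $\Aa(T)$, the two being related by a fixed bijection $h$ between the two (finite) label alphabets (sending $(b_1,\dots,b_m)$ to the function $q_j\mapsto b_j$, with the understanding that $f_\unode$ always takes values in $\{0,1\}\subseteq\{0,1,2\}$). So I define a one-state finite tree transducer $\Tt$ over the output alphabet of $\Gg_m$, with $\delta(q,c,r)=\symb{h(c)}[\xVar_{1,q}][\dots][\xVar_{r,q}]$ for every label $c$ and arity $r\le\rMax(\Gg_m)$, i.e. $\Tt$ simply relabels every node by $h$ and recurses on children. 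By Fact~\ref{fact:transducer}, one can construct a scheme $\Gg_\Tt$ generating $\Tt(T_m)$, where $T_m$ is the tree generated by $\Gg_m$; and $\Tt(T_m)=\Aa(T)$ by the discussion above. Setting $\Gg_\Aa=\Gg_\Tt$ completes the construction.

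\emph{Main obstacle.} I expect the only genuinely delicate point to be the bookkeeping in Step~1: making sure that after enriching labels one can faithfully re-express each \mso sentence $\phi_{q_j}$ over the growing alphabet, and that the truth value computed by reflection over the enriched tree agrees with the truth value of the original $\phi_{q_j}$ over $T\restr_\unode$. This is where one must be careful that the enrichment is label-preserving in the sense of Fact~\ref{fact:mso-reflection} (the first component of each new label is the old label), so that projecting the enriched subtree to $\Sigma_\Gg$ recovers $T\restr_\unode$ exactly. Everything else — iterating reflection, and the transducer in Step~2 — is routine.
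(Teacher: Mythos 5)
Your proof is correct and follows essentially the same route as the paper: iterated logical reflection (Fact~\ref{fact:mso-reflection}), rewriting each $\phi_{q_j}$ into a sentence $\phi_{q_j}'$ over the enriched alphabet by replacing each atomic test $a(\Xvar)$ with the test for labels projecting to $a$. The only difference is the final step, where the paper simply renames the letters in the node constructors of the resulting scheme directly instead of invoking Fact~\ref{fact:transducer}; both are fine, yours just uses slightly heavier machinery for a purely letter-to-letter relabeling.
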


	\begin{proof}
		Let $\Aa=(\Sigma,Q,(\phi_q)_{q\in Q})$.
		Assume that $Q=\{1,\dots,n\}$, and take $\Gg_0=\Gg$ and $T_0=T$.
		Consecutively for $q=1,\dots,n$ we want to apply Fact~\ref{fact:mso-reflection} to $\phi_q$ and $\Gg_{q-1}$, and obtain a scheme $\Gg_q$ that generates a tree $T_q$ of the same shape as $T$,
		and such that its every node $\unode$ is labeled by a tuple $(a,b_1,\dots,b_q)$, where $a$ is the label of $\unode$ in $T$,
		and $b_i$ says whether $\phi_i$ is satisfied in $T\restr_\unode$ for $i\in\{1,\dots,q\}$.
		Strictly speaking, we cannot apply Fact~\ref{fact:mso-reflection} to the original sentences $\phi_q$ (these sentences can be evaluated in $T$, but not in $T_{q-1}$).
		We need to slightly modify the sentences: out of $\phi_q$ we obtain $\phi_q'$ by changing every subformula of the form $a(\Xvar)$
		to a formula saying that every node in the set represented by $\Xvar$ is labeled by a letter from $\{a\}\times\{\true,\false\}^{q-1}$.
		Then $\phi_q$ is satisfied in $T\restr_\unode$ if and only if $\phi_q'$ is satisfied in $T_{q-1}\restr_\unode$;
		in consequence, we can apply Fact~\ref{fact:mso-reflection} to $\phi_q'$ and $\Gg_{q-1}$.

		The last tree, $T_n$, contains truth values of all sentences $\phi_q$.
		In order to obtain $\Gg_A$ as required, it is thus enough to rename letters appearing in $\Gg_n$:
		we change every letter $(a,b_1,\dots,b_n)$ to $(a,f)$ for $f\colon Q\to\{0,1,2\}$ mapping every $q\in Q$ to $1$ if $b_q=\true$, and to $0$ if $b_q=\false$.
	\end{proof}

	As one can expect, we can also compose schemes with \unbound-prefix automata, and for that we need Facts~\ref{fact:sup-reflection} and~\ref{fact:transducer}.

	\begin{lem}\label{lem:main-technical}
		For every \unbound-prefix automaton $\Aa$ and every scheme $\Gg$ generating a tree $T$, where $\Sigma_\Gg\subseteq\Sigmain(\Aa)$,
		one can construct a scheme $\Gg_\Aa$ that generates the tree $\Aa(T)$.
	\end{lem}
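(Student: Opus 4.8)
The plan is to combine SUP reflection (Fact~\ref{fact:sup-reflection}) with two uses of closure under finite tree transducers (Fact~\ref{fact:transducer}) and one use of Lemma~\ref{lem:mso-and-schemes}. Write $\Aa=(\Sigma,Q,\QImp,\Delta)$. The key observation is that a run of $\Aa$ on a subtree $T\restr_\unode$ is a finite object: no transition of $\Delta$ can fire at a $\top$-labelled node, so the nodes receiving a state from $Q$ form a finite prefix of $T\restr_\unode$ containing its root, and on this prefix the run is constrained node-by-node by $\Delta$. Consequently, for a fixed target state $q$, the runs of $\Aa$ on $T\restr_\unode$ assigning $q$ to the root can be enumerated by nondeterministic choices, and the quantity to which the value $2$ refers, namely the number of nodes $\wnode$ with $\rho(\wnode)\in\QImp$, is exactly the kind of quantity handled by $\SUP$.

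Concretely, the first step is to build a finite tree transducer $\Tt$ turning $T$ into a tree $T'$ that consists of a ``spine'' isomorphic to $T$ with small ``gadgets'' hung off it. Processing a node $\unode$ of $T$ with label $a$ and $\nChld$ children, the transducer emits a node labelled $(h,a)$ whose last $\nChld$ children continue the spine into the children of $\unode$, and whose first $|Q|$ children are gadgets $g_{\unode,q}$, one for each $q\in Q$ (these gadgets only address the children of $\unode$, which the transducer can still see). Each gadget $g_{\unode,q}$ is designed so that the finite trees obtainable from it by $\rednd$-reductions, containing no $\nd$ and no $\omega$, correspond exactly to the runs of $\Aa$ on $T\restr_\unode$ assigning $q$ to the root: at a node whose ``requested state'' is $p\in Q$ the gadget performs an $\nd$-branching over all state-tuples that, together with $p$ and the current letter, form a transition of $\Delta$, and recurses into the $i$-th child with requested state equal to the $i$-th component; a node whose requested state is $\top$ becomes a leaf; a node with no matching transition becomes $\symb{\omega}$, so that a ``stuck'' branch cannot survive in any finite $\nd$- and $\omega$-free tree; and a unary $\symb{\sharp}$-node is inserted exactly above those gadget nodes whose requested state lies in $\QImp$. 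Each gadget is finally topped with a fixed non-$\nd$ letter, so that its root will be labelled by SUP reflection. By Fact~\ref{fact:transducer}, $T'$ is generated by a scheme; by construction $\Ll(T'\restr_{g_{\unode,q}})$ is precisely the set of encodings of runs of $\Aa$ on $T\restr_\unode$ rooted at $q$, and such an encoding contains exactly one $\sharp$ per node carrying a state from $\QImp$.

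The second step is to apply SUP reflection (Fact~\ref{fact:sup-reflection}) to this scheme. It annotates $g_{\unode,q}$ with the set of all subsets $A$ of the scheme's alphabet for which $\SUP_A(\Ll(T'\restr_{g_{\unode,q}}))$ holds. In particular, $\emptyset$ lies in this set iff $\Ll(T'\restr_{g_{\unode,q}})$ is nonempty, i.e.\ iff some run of $\Aa$ on $T\restr_\unode$ assigns $q$ to the root, and $\{\sharp\}$ lies in it iff for every $n\in\Nat$ there is such a run using at least $n$ nodes with states from $\QImp$. Reading off these two bits thus recovers the value $f_\unode(q)\in\{0,1,2\}$ from the definition of $\Aa(T)$ (namely $2$ if $\{\sharp\}$ is present, else $1$ if $\emptyset$ is present, else $0$). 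It remains to collect, at each spine node $h_\unode$, the values $f_\unode(q)$ for all $q\in Q$, which sit on its $|Q|$ gadget children: I would compose with an \mso automaton (Lemma~\ref{lem:mso-and-schemes}) whose sentences inspect the boundedly many gadget children of a spine node and their SUP annotations, which is easily expressed in \mso. Finally, a second finite tree transducer (Fact~\ref{fact:transducer}) walks the spine, discards the gadget subtrees, and relabels each spine node $h_\unode$, which by then carries both $a_\unode$ and the computed $f_\unode$, by the pair $(a_\unode,f_\unode)$; the result has the same shape as $T$ and equals $\Aa(T)$.

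The step I expect to be delicate is the design of $\Tt$ together with the verification that $\Ll(T'\restr_{g_{\unode,q}})$ is in an exact, count-preserving bijection with the runs of $\Aa$ on $T\restr_\unode$ rooted at $q$. This needs care with the $\top$-continuations below the frontier of a run, with the ``stuck'' situation (handled by emitting $\symb{\omega}$), and with the fact that a top-down transducer cannot re-dispatch the node it is currently reading---which is why all gadgets for $\unode$ must be produced at the moment $\unode$ is processed, while its children are still accessible. Once $\Tt$ is correct, the remaining steps are routine invocations of Facts~\ref{fact:sup-reflection} and~\ref{fact:transducer} and of Lemma~\ref{lem:mso-and-schemes}.
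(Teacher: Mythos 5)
Your proposal is correct and follows essentially the same route as the paper: a top-down transducer that, at each node, emits the spine together with $|Q|$ nondeterministic-choice gadgets enumerating the runs of $\Aa$ rooted in each state (with important states made countable by a marker letter, where the paper relabels states to $0/1$ instead of inserting $\sharp$), followed by SUP reflection to detect unboundedly many important states, an \mso/logical-reflection pass to pull the two bits per state up to the spine node, and a final relabelling that discards the gadgets. The only cosmetic deviations are that you read nonemptiness of $\Ll$ off the SUP annotation via $A=\emptyset$ where the paper uses a separate \mso check, and that you use $\omega$ rather than a childless $\nd$-node for stuck branches; both variants are sound.
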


	It is easy to deduce Theorem~\ref{thm:main} out of Lemmata~\ref{lem:mso-and-schemes} and~\ref{lem:main-technical}.
	Indeed, consider an \msoufin sentence $\phi$ and a scheme $\Gg_0$ generating a tree $T_0$.
	By Lemma~\ref{lem:logic-to-automata}, $\phi$ is equivalent to a nested \unbound-prefix \mso automaton $\Aa=\Aa_1\vartriangleright\dots\vartriangleright\Aa_k$, together with an accepting set $\SigmaF$.
	By consecutively applying Lemmata~\ref{lem:mso-and-schemes} and~\ref{lem:main-technical} for $i=1,\dots,k$, we combine $\Gg_{i-1}$ with $\Aa_i$, obtaining a scheme $\Gg_i$ that generates the tree $T_i=\Aa_i(T_{i-1})$.
	The root of $T_k=\Aa(T_0)$ has label in $\SigmaF$ if and only if $\phi$ is satisfied in $T_0$.
	Surely this label can be read: having $\Gg_k$, we simply start generating the tree $T_k$, until its root is generated
	(by Fact~\ref{fact:convergent}, we can assume that $\Lambda(\Gg_k)$ is fully convergent).

	We now come to the proof of Lemma~\ref{lem:main-technical}.
	We are thus given a \unbound-prefix automaton $\Aa=(\Sigma,Q,\QImp,\Delta)$, and a scheme $\Gg$ generating a tree $T$, where $\Sigma_\Gg\subseteq\Sigma$;
	our goal is to create a scheme $\Gg_\Aa$ that generates the tree $\Aa(T)$.
	As a first step, we create a finite tree transducer $\Tt$ that converts $T$ into a tree containing all runs of $\Aa$ on all subtrees of $T$.
	Let us write $Q=\{p_1,\dots,p_{|Q|}\}$.
	As $\Tt$ we take $(\Sigma_\Gg,\rMax(\Gg),Q\cup\{\qIni,\top\},\qIni,\delta)$, where $\qIni\not\in Q$ is a fresh state, and $\delta$ is defined as follows.
	For $q\in Q$, $a\in\Sigma_\Gg$, and $\nChld\leq\rMax(\Gg)$ we take
	\begin{align*}
		\delta(q,a,\nChld)=\symb{\nd}[\symb{q}[\xVar_{1,q_{11}}][\dots][\xVar_{\nChld,q_{1\nChld}}]][\dots][\symb{q}[\xVar_{1,q_{k1}}][\dots][\xVar_{\nChld,q_{k\nChld}}]]\,,
	\end{align*}
	where $(q,a,q_{11},\dots,q_{1\nChld}),\dots,(q,a,q_{k1},\dots,q_{k\nChld})$ are all elements of $\Delta$ being of length $\nChld+2$ and having $q$ and $a$ on the first two coordinates.
	Moreover, for $a\in\Sigma_\Gg$ and $\nChld\leq\rMax(\Gg)$ (and for a special letter ``?'') we take
	\begin{align*}
		\delta(\qIni,a,\nChld)&=\symb{a}[\xVar_{1,\qIni}][\dots][\xVar_{\nChld,\qIni}][\symb{?}[\delta(p_1,a,\nChld)]][\dots][\symb{?}[\delta(p_{|Q|},a,\nChld)]]\,,
		&&\mbox{and}\\
		\delta(\top,a,\nChld)&=\symb{\top}\,.
	\end{align*}

	We see that $\Tt(T)$ contains all nodes of the original tree $T$.
	Additionally, below every node $\unode$ coming from $T$ we have $|Q|$ new children labeled by $?$, such that subtrees starting below these children describe runs of $\Aa$ on $T\restr_\unode$,
	starting in particular states.
	More precisely, when $\unode$ has $\nChld$ children in $T$, for every $i\in\{1,\dots,|Q|\}$
	there is a bijection between trees $U$ in $\Ll(\Tt(T)\restr_{\unode(\nChld+i)1})$ and runs $\rho$ of $\Aa$ on $T\restr_\unode$ such that $\rho(\epsilon)=p_i$.
	The label of every node $\vnode$ in such a tree $U$ contains the state assigned by $\rho$ to $\vnode$,
	where $U$ contains exactly all nodes to which $\rho$ assigns a state from $Q$, and all minimal nodes to which $\rho$ assigns $\top$
	(i.e., such that $\rho$ does not assign $\top$ to their parents).
	Recall that by definition $\rho$ can assign a state from $Q$ only to a finite prefix of the tree $T\restr_\unode$,
	which corresponds to the fact that $\Ll(\Tt(T)\restr_{\unode(\nChld+i)1})$ contains only finite trees.

	Actually, we need to consider a transducer $\Tt'$ obtained from $\Tt$ by a slight modification: we replace the letter $q$ appearing in $\delta(q,a,\nChld)$ by $1$ if $q\in\QImp$, and by $0$ if $q\not\in\QImp$.
	Then, for a node $\unode$ of $T$ having $\nChld$ children, and for $i\in\{1,\dots,|Q|\}$, we have the following equivalence:
	$\SUP_{\{1\}}(\Tt'(T)\restr_{\unode(\nChld+i)})$ holds if and only if for every $n\in\Nat$ there is a run $\rho_n$ of $\Aa$ on $T\restr_\unode$ that assigns $p_i$ to the root of $T\restr_\unode$,
	and such that for at least $n$ nodes $\vnode$ it holds that $\rho_n(\vnode)\in \QImp$.

	We now apply Fact~\ref{fact:transducer} to $\Gg$ and $\Tt'$; we obtain a scheme $\Gg_{\Tt'}$ that generates the tree $\Tt'(T)$.
	Then, we apply Fact~\ref{fact:sup-reflection} (SUP reflection) to $\Gg_{\Tt'}$, which gives us a scheme $\Gg'$.
	The tree $T'$ generated by $\Gg'$ has the same shape as $\Tt'(T)$, but in the label of every node $\vnode$ (originally having label other than $\nd$)
	there is additionally written a set $\Uu$ containing these sets $A\subseteq\Sigma_{\Gg_{\Tt'}}$ for which $\SUP_A(\Ll(T\restr_\vnode))$ holds.
	Next, using Fact~\ref{fact:mso-reflection} (logical reflection) $2|Q|$ times, we annotate every node $\unode$ of $T'$, having $\nChld'$ children, by logical values of the following properties, for $i=1,\dots,|Q|$:
	\begin{itemize}
	\item	whether $\nChld'\geq|Q|$ and $\Ll(T'\restr_{\unode(\nChld'-|Q|+i)1})$ is nonempty, and
	\item	whether $\nChld'\geq|Q|$ and the label $(a,\Uu)$ of node $\unode(\nChld'-|Q|+i)$ in $T'$ satisfies $\{1\}\in\Uu$.
	\end{itemize}
	Clearly both these properties can be expressed in \mso.
	For nodes $\unode$ coming from $T$, the first property holds when there is a run of $\Aa$ on $T\restr_\unode$ that assigns $p_i$ to the root of $T\restr_\unode$,
	and the second property holds when for every $n\in\Nat$ there is a run $\rho_n$ of $\Aa$ on $T\restr_\unode$ that assigns $p_i$ to the root of $T\restr_\unode$, and such that for at least $n$ nodes $\wnode$ it holds that $\rho_n(\wnode)\in \QImp$.
	Let $\Gg''$ be the scheme generating the tree $T''$ containing these annotations.

	Finally, we create $\Gg_\Aa$ by slightly modifying $\Gg''$:
	we replace every node constructor $\symb{(a,\Uu,\sigma_1,\tau_1,\dots,\sigma_{|Q|},\tau_{|Q|})}[P_1][\dots][P_{\nChld+|Q|}]$ with $\symb{(a,f)}[P_1][\dots][P_\nChld]$, where $f\colon Q\to\{0,1,2\}$ is such that
	$f(p_i)=2$ if $\tau_i=\true$, and $f(p_i)=1$ if $\sigma_i=\true$ but $\tau_i=\false$, and $f(p_i)=0$ otherwise, for all $i\in\{1,\dots,|Q|\}$
	(we do not do anything with node constructors of arity smaller than $|Q|$).
	As a result, only the nodes coming from $T$ remain, and they are appropriately relabeled.

\section{Extensions}\label{sec:conclusion}

	In this section we give a few possible extensions of our main theorem, saying that we can evaluate \msoufin sentences on trees generated by recursion schemes.
	First, we notice that our solution actually proves a stronger result: logical reflection for \msoufin.

	\begin{thm}\label{thm:msoufin-reflection}
		For every \msoufin sentence $\phi$ and every scheme $\Gg$ generating a tree $T$ one can construct a scheme $\Gg_\phi$ that generates a tree of the same shape as $T$,
		and such that its every node $\unode$ is labeled by a pair $(a_\unode,b_\unode)$,
		where $a_\unode$ is the label of $\unode$ in $T$, and $b_\unode$ is $\true$ if $\phi$ is satisfied in $T\restr_\unode$ and $\false$ otherwise.
	\end{thm}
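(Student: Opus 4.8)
The plan is to observe that the construction underlying Theorem~\ref{thm:main} already produces, at every node, the label needed for reflection, so the only work is to carry the per-node phenotype information back into a scheme rather than just reading the root. Concretely, given an \msoufin sentence $\phi$ and a scheme $\Gg_0$ generating $T_0$, I would first use the relativization result (Fact~\ref{fact:relativize}) to turn $\phi$ into a formula $\widehat\phi(\Xvar)$ that, for a singleton $\{\unode\}$, holds exactly when $\phi$ is satisfied in $T_0\restr_\unode$. Then I would apply Lemma~\ref{lem:logic-to-automata} not to $\phi$ itself but to $\widehat\phi(\Xvar)$ — or, more directly, use Lemma~\ref{lem:logic-to-automata-aux}, which is stated for formulae with free variables and produces a nested \unbound-prefix \mso automaton $\Aa_{\widehat\phi}$ together with \mso formulae $\xi_{\widehat\phi,\tau}$ detecting each phenotype. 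The key point is that phenotypes are compositional (Lemma~\ref{lem:compositionality}), so the phenotype of $T_0\restr_\unode$ under the valuation mapping $\Xvar$ to $\{\unode\}$, and hence the truth value of $\phi$ in $T_0\restr_\unode$, is determined locally from the phenotypes of the children and the label.

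Next I would feed this automaton into the scheme machinery exactly as in the proof of Theorem~\ref{thm:main}: by consecutively applying Lemma~\ref{lem:mso-and-schemes} (for the \mso-automaton layers) and Lemma~\ref{lem:main-technical} (for the \unbound-prefix layers), I obtain a scheme $\Gg_k$ generating $\Aa_{\widehat\phi}(T_0)$, a tree of the same shape as $T_0$ in which every node's label has been augmented with the data $f_\unode\colon Q\to\{0,1,2\}$ accumulated by the automaton. From this augmented label, at each node $\unode$, one can compute the phenotype $\pht{\phi}{T_0\restr_\unode}{\VALempty[\Xvar\mapsto\{\unode\}]}$ and therefore, via the function $\mathit{tv}_\phi$ defined on page~\pageref{page:tv}, the bit $b_\unode$ saying whether $\phi$ holds in $T_0\restr_\unode$. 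The final step is a purely syntactic relabelling of $\Gg_k$: replace every node constructor $\symb{\eta}[P_1][\dots][P_\nChld]$ by $\symb{(a_\unode,b_\unode)}[P_1][\dots][P_\nChld]$, where $a_\unode$ and $b_\unode$ are read off deterministically from $\eta$. Since \unbound-prefix and \mso automata do not change the shape of the tree and keep the original label as part of the new label, this recovers a scheme $\Gg_\phi$ of the required form.

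The one genuine subtlety — and the main obstacle — is making sure the \emph{local} valuation $\Xvar\mapsto\{\unode\}$ is handled correctly, so that at each node we really get the phenotype for "$\Xvar$ is the singleton of this very node" and not some other valuation. Two clean ways to do this: either build $\widehat\phi$ so that $\Xvar$ is quantified away internally (as in the conclusion of Fact~\ref{fact:relativize}, using $\forall\Xvar'.(\mathit{sing}(\Xvar')\land\Xvar'\subseteq\Xvar)\arr\phi'(\Xvar')$), in which case the resulting automaton already computes per-subtree truth values and the final relabelling is immediate; or work directly with the free-variable automaton of Lemma~\ref{lem:logic-to-automata-aux}(1) and, as in the \unbound-case of that proof, add one more \mso layer that at each node simulates marking just that node in $\Xvar$ and applies $\Comp_{a,\nChld,\phi}$ to the children's phenotypes. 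Either route reduces everything to already-proved facts; the first is shorter, so I would present that one, noting only that one must relativize the \unbound quantifiers too, which Fact~\ref{fact:relativize} explicitly permits since all free variables of the relevant subformulae stay in $\Vv^\mathsf{fin}$. I expect the write-up to be short: essentially "apply Lemma~\ref{lem:logic-to-automata} to $\widehat\phi(\Xvar)$, run the scheme constructions of Theorem~\ref{thm:main}, and relabel."
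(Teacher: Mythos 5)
Your core idea is the paper's proof: the nested automaton $\Aa_\phi$ built for the \emph{sentence} $\phi$ in the proof of Theorem~\ref{thm:main} already writes, into the label of every node $\unode$, enough information to decide whether $\phi$ holds in $T\restr_\unode$ --- because each automaton layer computes its output label at $\unode$ from $T\restr_\unode$ alone, so $(\Bb_\phi(T))\restr_\unode=\Bb_\phi(T\restr_\unode)$ and the label of $\unode$ determines $\pht{\phi}{T\restr_\unode}{\VALempty}$, hence (via $\mathit{tv}_\phi$) the bit $b_\unode$. Running the scheme constructions of Theorem~\ref{thm:main} and then relabelling node constructors is exactly what the paper does.

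The relativization detour, however, is not just unnecessary but would not work as you describe it, and since you say you would present that route, this needs flagging. The node labels produced by the automaton of Lemma~\ref{lem:logic-to-automata-aux} encode phenotypes under the \emph{empty} valuation only (Item (2)); for $\widehat\phi(\Xvar)$ with $\nu(\Xvar)=\emptyset$ the formula is vacuously true, so these labels carry no information about the valuation $\Xvar\mapsto\{\unode\}$. Extracting, per node $\unode$, the truth of a formula with a free variable under $\Xvar\mapsto\{\unode\}$ in the \emph{whole} tree is precisely the content of the stronger Theorem~\ref{thm:better-reflection}, which the paper obtains only later via effective selection --- it cannot be read off the labels of $\Aa_{\widehat\phi}(T_0)$ by a relabelling. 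Your expression $\pht{\phi}{T_0\restr_\unode}{\VALempty[\Xvar\mapsto\{\unode\}]}$ also conflates the two settings: $\phi$ is a sentence with no free $\Xvar$, and the only phenotype you need (and the one the labels actually give you) is $\pht{\phi}{T_0\restr_\unode}{\VALempty}$. Delete the appeal to Fact~\ref{fact:relativize} and your argument becomes correct and coincides with the paper's.
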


	\begin{proof}
		In the proof of Theorem~\ref{thm:main} we have constructed a nested \unbound-prefix automaton $\Aa$ equivalent to $\phi$, and then a scheme $\Gg_\Aa$ that generates the tree $\Aa(T)$.
		In every node $\unode$ of $\Aa(T)$ it is written whether $T\restr_\unode$ satisfies $\phi$.
		Moreover, labels of $\Aa(T)$ contain also original labels coming from $T$.
		Thus in order to obtain $\Gg_\phi$ it is enough to appropriately relabel node constructors appearing in $\Gg_\Aa$.
	\end{proof}

	In Theorem~\ref{thm:msoufin-reflection}, the sentence $\phi$ talks only about the subtree starting in $\unode$.
	One can obtain a stronger version of logical reflection (Theorem~\ref{thm:better-reflection}), where $\phi$ is a formula allowed to talk about $\unode$ in the context of the whole tree.
	This version can be obtained as a simple corollary of Theorem~\ref{thm:msoufin-reflection} by using the same methods as in Broadbent et al.~\cite[Proof of Corollary 2]{reflection}.
	As shown on page~\pageref{proof:better-reflection}, it is also an immediate consequence of our next theorem (Theorem~\ref{thm:msoufin-selection}).

	\begin{thm}\label{thm:better-reflection}
		For every \msoufin formula $\phi(\Xvar)$ with one free variable $\Xvar$ and every scheme $\Gg$ generating a tree $T$,
		one can construct a scheme $\Gg_\phi$ that generates a tree of the same shape as $T$, and such that its every node $\unode$ is labeled by a pair $(a_\unode,b_\unode)$,
		where $a_\unode$ is the label of $\unode$ in $T$, and $b_\unode$ is $\true$ if $\phi$ is satisfied in $T$ with $\Xvar$ valuated to $\{\unode\}$, and $\false$ otherwise.
	\end{thm}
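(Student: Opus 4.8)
The plan is to reduce the "formula with a free variable'' situation to the "sentence'' situation already handled by Theorem~\ref{thm:msoufin-reflection}. The obstacle is that $\phi(\Xvar)$ talks about $\unode$ in the context of the \emph{whole} tree $T$, whereas Theorem~\ref{thm:msoufin-reflection} only lets us annotate each node by a property of its own subtree $T\restr_\unode$. The standard remedy is to pass to a tree in which the information needed to re-evaluate $\phi$ at $\unode$ from the \emph{top} is carried locally; this is exactly the trick used in Broadbent et al.~\cite[Proof of Corollary 2]{reflection}, which I would now carry over to \msoufin.

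First I would use a transducer (Fact~\ref{fact:transducer}) to build from $T$ an auxiliary tree $T^\sharp$ of the same shape in which every node $\unode$ additionally remembers the finite \emph{context information} along the path from the root to $\unode$: concretely, the label of $\unode$ itself is kept, and in addition we guess, at the root, a single "marked node'' $\unode$ and propagate a boolean flag down exactly one branch so that in $T^\sharp$ there is, for each node $\unode$, a copy of $T$ in which $\unode$ is marked and from which $\phi$ can be evaluated. More precisely, rather than guessing, I would iterate over the (finitely many) possibilities: a finite transducer cannot, by itself, choose which node to mark, but it can produce, below each node $\unode$ coming from $T$, an auxiliary subtree isomorphic to $T$ in which $\unode$ is distinguished by a special letter. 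Then the sentence $\phi'\equiv\exists\Xvar.\,\mathit{sing}(\Xvar)\land(\text{$\Xvar$ is the marked node})\land\phi(\Xvar)$ is an \msoufin sentence, and its truth in that auxiliary subtree is precisely "$\phi$ holds in $T$ with $\Xvar$ valuated to $\{\unode\}$''. Applying Theorem~\ref{thm:msoufin-reflection} to $\phi'$ and the scheme generating the augmented tree annotates each such auxiliary subtree's root with the desired boolean $b_\unode$.

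Finally I would push that annotation back: since in the augmented tree the boolean $b_\unode$ sits at a node that is a fixed-offset descendant of the image of $\unode$, one more application of \mso reflection (Fact~\ref{fact:mso-reflection}), or directly a relabelling transducer (Fact~\ref{fact:transducer}), copies $b_\unode$ onto $\unode$ itself and erases the auxiliary subtrees, yielding a scheme $\Gg_\phi$ generating a tree of the same shape as $T$ whose node $\unode$ is labeled $(a_\unode,b_\unode)$, as required. Alternatively, as the excerpt already signals, this is subsumed by the selection theorem Theorem~\ref{thm:msoufin-selection} (see the cross-reference on page~\pageref{proof:better-reflection}): given a selector realising $\exists\Xvar.\phi(\Xvar)$ one reads off at each node whether it belongs to the selected set, which is one way to obtain $b_\unode$ when $\phi$ additionally pins $\Xvar$ to a singleton. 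The main difficulty, and the only place where care is needed, is arranging the auxiliary-subtree construction so that it is realisable by a \emph{finite} transducer composed with a scheme — i.e.\ that only boundedly much context data has to be threaded down each branch — which works here because "marking one node'' and "carrying its label'' is finite-state information; everything else is routine bookkeeping.
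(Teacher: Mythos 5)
Your primary construction has a genuine gap at its central step. You propose to attach, below every node $\unode$ of $T$, ``an auxiliary subtree isomorphic to $T$ in which $\unode$ is distinguished by a special letter,'' and to realise this with Fact~\ref{fact:transducer}. The transducers of Fact~\ref{fact:transducer} are \emph{deterministic top-down} devices: when processing a node $\unode$ they can only emit a finite pattern whose leaves are plugged with transformed copies of the subtrees $T\restr_{\unode 1},\dots,T\restr_{\unode\nChld}$. They have no access to the ancestors or siblings of $\unode$, so they cannot reproduce a copy of the whole tree $T$ (with $\unode$ marked) below $\unode$; at best they can attach a marked copy of $T\restr_\unode$, which is exactly the weaker situation already covered by Theorem~\ref{thm:msoufin-reflection}. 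Your closing remark that ``only boundedly much context data has to be threaded down each branch'' contradicts the construction itself: a copy of $T$ is not bounded data. What \emph{is} bounded, and what Broadbent et al.\ actually thread down in their Corollary~2, is the logical type (phenotype) of the context of $\unode$; you allude to this in passing but never make it the construction, and without it the argument does not go through.

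Your fallback via Theorem~\ref{thm:msoufin-selection} is the route the paper takes, but as stated it is also incomplete: selecting a witness of $\exists\Xvar.\phi(\Xvar)$ yields \emph{one} set satisfying $\phi$ (and only under the hypothesis that such a set exists); it does not tell you, node by node, whether $\phi$ holds with $\Xvar$ valuated to $\{\unode\}$. The missing idea is to apply selection not to $\phi$ but to an auxiliary formula with a \emph{unique} satisfying set, namely
\begin{align*}
	\phi'(\Xvar')\equiv\forall\Xvar.\,\mathit{sing}(\Xvar)\arr(\Xvar\subseteq\Xvar'\arr\phi(\Xvar))\land(\neg(\Xvar\subseteq\Xvar')\arr\neg\phi(\Xvar))\,,
\end{align*}
which forces $\Xvar'$ to be exactly the set of nodes $\unode$ for which $\phi$ holds with $\Xvar\mapsto\{\unode\}$; here $\exists\Xvar'.\phi'(\Xvar')$ is always true, so Theorem~\ref{thm:msoufin-selection} applies and the selected marking is precisely the annotation $b_\unode$ required by the theorem. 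With that formula supplied, the selection route becomes a two-line proof; without it, neither of your two routes closes.
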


	Carayol and Serre~\cite{selection} show one more property of \mso and schemes, called effective selection.
	This time we are given an \mso sentence $\psi$ of the form $\exists \Xvar.\phi$.
	Assuming that $\psi$ is satisfied in the tree $T$ generated by a scheme $\Gg$, one wants to compute an example set $X$ of nodes of $T$,
	such that $\phi$ is true in $T$ with the variable $\Xvar$ valuated to this set $X$.
	The theorem says that it is possible to create a scheme $\Gg_\phi$ that generates a tree of the same shape as $T$, in which nodes belonging to some such example set $X$ are marked.
	We can show the same for \msoufin.

	\begin{thm}\label{thm:msoufin-selection}
		For every \msoufin formula $\phi(\Xvar)$ with one free variable $\Xvar\in\Vv^\mathsf{inf}$ and every scheme $\Gg$ generating a tree $T$,
		if $\exists\Xvar.\phi(\Xvar)$ holds in $T$, then
		one can construct a scheme $\Gg_\phi$ that generates a tree $T'$ of the same shape as $T$, and such that its every node $\unode$ is labeled by a pair $(a_\unode,b_\unode)$,
		where $a_\unode$ is the label of $\unode$ in $T$, and $b_\unode$ belongs to $\{\true,\false\}$; when $X$ is the set of nodes of $T'$ having $\true$ on the second coordinate of the label,
		$\phi$ is holds in $T$ with $\Xvar$ valuated to $X$.
	\end{thm}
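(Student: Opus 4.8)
The plan is to reduce selection for \msoufin to selection for ordinary \mso (Carayol and Serre~\cite{selection}).
	The point is that the nested \unbound-prefix \mso automaton constructed in Section~\ref{sec:automata} replaces the \unbound quantifiers of $\phi$ by extra labels computed by its layers, leaving the truth of $\phi$ as a genuine \mso property of the output tree in which the witnessing set is still available as a free variable.
	Concretely, I would first apply Lemma~\ref{lem:logic-to-automata-aux}(1) to $\phi$, with $\Sigma=\Sigma_\Gg$ and $\rMax=\rMax(\Gg)$.
	This gives a nested \unbound-prefix \mso automaton $\Aa_\phi$ with $\Sigmain(\Aa_\phi)=\Sigma_\Gg$, together with \mso formulae $\xi_{\phi,\tau}$ (each with the single free variable $\Xvar$) for all $\tau\in\Pht\phi$, such that $\Aa_\phi(T),\nu\models\xi_{\phi,\tau}$ if and only if $\pht{\phi}{T}{\nu}=\tau$, for every valuation $\nu$ --- in particular for every valuation mapping $\Xvar$ to an arbitrary, possibly infinite, set of nodes of $T$.
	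I then set
	\begin{align*}
		\chi(\Xvar)\equiv\bigvee_{\tau\in\Pht\phi:\,\mathit{tv}_\phi(\tau)=\true}\xi_{\phi,\tau}\,,
	\end{align*}
	which is an \mso formula with one free variable $\Xvar$.
	Since the phenotypes partition all possibilities and $\mathit{tv}_\phi(\pht{\phi}{T}{\nu})=\true$ exactly when $T,\nu\models\phi$ (page~\pageref{page:tv}), one obtains that for every set $X$ of nodes of $T$, $\Aa_\phi(T),[\Xvar\mapsto X]\models\chi$ if and only if $\phi$ holds in $T$ with $\Xvar$ valuated to $X$.

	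Next I would generate $\Aa_\phi(T)$ by a scheme.
	Writing $\Aa_\phi=\Aa_1\vartriangleright\dots\vartriangleright\Aa_k$ and applying Lemmata~\ref{lem:mso-and-schemes} and~\ref{lem:main-technical} consecutively for $i=1,\dots,k$ (exactly as in the deduction of Theorem~\ref{thm:main} from these lemmata), I obtain a scheme $\Gg_{\Aa_\phi}$ generating $\Aa_\phi(T)$.
	Because $\exists\Xvar.\phi(\Xvar)$ holds in $T$, fixing a witness $X_0$ we have $T,[\Xvar\mapsto X_0]\models\phi$, hence $\Aa_\phi(T),[\Xvar\mapsto X_0]\models\chi$, so the \mso sentence $\exists\Xvar.\chi(\Xvar)$ holds in $\Aa_\phi(T)$.
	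I now invoke \mso effective selection~\cite{selection} for the scheme $\Gg_{\Aa_\phi}$ and the sentence $\exists\Xvar.\chi(\Xvar)$: it yields a scheme generating a tree of the same shape as $\Aa_\phi(T)$ in which a set $X$ of nodes is marked, with $\Aa_\phi(T),[\Xvar\mapsto X]\models\chi$, i.e.\ with $\phi$ true in $T$ under $\Xvar\mapsto X$.

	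It remains to relabel.
	Each node of $\Aa_\phi(T)$ carries, as one coordinate of its label, the original $\Sigma_\Gg$-letter $a_\unode$ of $T$ (the input alphabet of $\Aa_1$ survives as a coordinate of every $\Sigmaout(\Aa_i)$ along the nesting), and the selection step adds one Boolean marking bit; so I rename the node constructors of the scheme delivered by~\cite{selection}, turning a node whose label was a pair consisting of the full label of $\Aa_\phi(T)$ and a marking bit $b_\unode$ into a node labeled $(a_\unode,b_\unode)$.
	The resulting scheme $\Gg_\phi$ is then as required, with $X=\{\unode\mid b_\unode=\true\}$.

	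Since the \unbound quantifiers of $\phi$ are absorbed entirely into the layers of $\Aa_\phi$ and the production of an actual witness is delegated to~\cite{selection}, no genuinely new obstacle arises; the one thing to check carefully is the reduction itself --- that the $\xi_{\phi,\tau}$ really are \emph{\mso} formulae whose free variable $\Xvar$ still ranges over all sets, so that $\chi$ expresses ``$\phi$ holds under this valuation of $\Xvar$'' as a bona fide \mso property to which~\cite{selection} applies.
	Here it matters that $\Xvar\in\Vv^\mathsf{inf}$, so that, by the syntactic restriction of \msoufin, $\Xvar$ never occurs free under an \unbound quantifier inside $\phi$, and hence nothing about $\Xvar$ has to be resolved inside the automaton layers.
	\label{proof:better-reflection}%
	As a byproduct, Theorem~\ref{thm:better-reflection} follows by applying the above to the formula $\psi(\Yvar)$ asserting that $\Yvar$ consists exactly of those nodes $\unode$ for which $\phi$ holds with $\Xvar$ valuated to $\{\unode\}$; this $\psi$ has a unique witness $\Yvar$ in every tree, $\exists\Yvar.\psi(\Yvar)$ trivially holds, and the marking produced by selection is precisely the labelling required there.
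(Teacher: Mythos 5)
Your proposal is correct and follows essentially the same route as the paper: translate $\phi$ via Lemma~\ref{lem:logic-to-automata-aux}(1) into the \mso formula $\bigvee_{\tau:\mathit{tv}_\phi(\tau)}\xi_{\phi,\tau}$ over the scheme-generable tree $\Aa_\phi(T)$, apply \mso effective selection of Carayol--Serre, and relabel; the paper merely factors this reduction out as a standalone lemma (Lemma~\ref{lem:selection-aux}) before invoking selection. Your closing remarks (that the $\xi_{\phi,\tau}$ accept arbitrary valuations of $\Xvar\in\Vv^\mathsf{inf}$, and the derivation of Theorem~\ref{thm:better-reflection} via a uniquely-witnessed formula) also match the paper's treatment.
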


	The proof of this theorem bases on the following lemma, which is also interesting in itself.

	\begin{lem}\label{lem:selection-aux}
		For every \msoufin formula $\phi$ and every scheme $\Gg$ generating a tree $T$ one can construct a scheme $\Gg_+$ that generates a tree $T'$ of the same shape as $T$,
		and an \mso formula $\phi_\mathit{MSO}$ (whose all free variables are also free in $\phi$) such that for every valuation $\nu$ in $T$ (defined at least for all free variables of $\phi$)
		it holds that $T',\nu\models\phi_\mathit{MSO}$ if and only if $T,\nu\models\phi$.
		Moreover, the label of every node of $T'$ contains as its part the label of that node in $T$.
	\end{lem}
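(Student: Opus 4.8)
The plan is to obtain $\Gg_+$ and $\phi_\mathit{MSO}$ as a direct corollary of the constructions of Section~\ref{sec:automata}, using in particular Item~(1) of Lemma~\ref{lem:logic-to-automata-aux} together with the possibility of composing schemes with the components of a nested \unbound-prefix \mso automaton (Lemmata~\ref{lem:mso-and-schemes} and~\ref{lem:main-technical}).

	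First I would set $\Sigma=\Sigma_\Gg$ and $\rMax=\rMax(\Gg)$, so that $T$ is a $(\Sigma,\rMax)$-tree, and apply Item~(1) of Lemma~\ref{lem:logic-to-automata-aux} to $\phi$. This yields a nested \unbound-prefix \mso automaton $\Aa_\phi=\Aa_1\vartriangleright\dots\vartriangleright\Aa_k$ with $\Sigmain(\Aa_\phi)=\Sigma$, together with \mso formulae $\xi_{\phi,\tau}$ for all $\tau\in\Pht{\phi}$, such that for every valuation $\nu$ in $T$ and every $\tau\in\Pht{\phi}$ we have $\Aa_\phi(T),\nu\models\xi_{\phi,\tau}$ if and only if $\pht{\phi}{T}{\nu}=\tau$. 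Then I would build $\Gg_+$ generating $T'\mathrel{:=}\Aa_\phi(T)$ by processing the layers of $\Aa_\phi$ one by one: setting $\Gg_0\mathrel{:=}\Gg$, for $i=1,\dots,k$ I would apply Lemma~\ref{lem:mso-and-schemes} when $\Aa_i$ is an \mso automaton and Lemma~\ref{lem:main-technical} when $\Aa_i$ is a \unbound-prefix automaton, obtaining a scheme $\Gg_i$ generating $\Aa_i(\dots(\Aa_1(T))\dots)$. The side condition $\Sigma_{\Gg_{i-1}}\subseteq\Sigmain(\Aa_i)$ is satisfied because $\Gg_{i-1}$ generates a tree over $\Sigmaout(\Aa_{i-1})=\Sigmain(\Aa_i)$ (and over $\Sigma=\Sigmain(\Aa_1)$ for $i=1$). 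Taking $\Gg_+\mathrel{:=}\Gg_k$ gives a scheme generating $T'=\Aa_\phi(T)$, a tree of the same shape as $T$; and since every layer of $\Aa_\phi$ only relabels a node with old label $a_\unode$ to a pair $(a_\unode,f_\unode)$, the label of every node of $T'$ contains the label of that node in $T$ as one of its coordinates.

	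Finally, recalling the function $\mathit{tv}_\phi\colon\Pht{\phi}\to\{\true,\false\}$ from page~\pageref{page:tv}, for which $\mathit{tv}_\phi(\pht{\phi}{T}{\nu})=\true$ exactly when $T,\nu\models\phi$, I would put
	\[
		\phi_\mathit{MSO}\ \mathrel{:=}\ \bigvee_{\tau\in\Pht{\phi}\,:\,\mathit{tv}_\phi(\tau)=\true}\xi_{\phi,\tau}\,.
	\]
	This is an \mso formula, and every free variable of each $\xi_{\phi,\tau}$ is free in $\phi$ (by inspection of the cases of the construction in Lemma~\ref{lem:logic-to-automata-aux}), hence so is every free variable of $\phi_\mathit{MSO}$. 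For any valuation $\nu$ in $T$ defined at least on the free variables of $\phi$ (which, as $T$ and $T'$ have the same nodes, is also a valuation in $T'$) we then get $T',\nu\models\phi_\mathit{MSO}$ iff $\pht{\phi}{T}{\nu}=\tau$ for some $\tau$ with $\mathit{tv}_\phi(\tau)=\true$ iff $\mathit{tv}_\phi(\pht{\phi}{T}{\nu})=\true$ iff $T,\nu\models\phi$, which is the required equivalence.

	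I do not expect a genuinely hard step here: the lemma is essentially a repackaging of Section~\ref{sec:automata}. The two points that need care are the book-keeping that the original label of each node survives through the whole chain of scheme transformations (which is immediate from the shape of the output alphabets of \unbound-prefix and \mso automata), and the claim that the $\xi_{\phi,\tau}$ introduce no new free variables --- this has to be checked through all cases of the construction in Lemma~\ref{lem:logic-to-automata-aux}, in particular the \unbound-case, where the auxiliary variables $\Xvar_\rho$ are introduced but immediately quantified.
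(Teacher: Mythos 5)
Your proposal is correct and follows essentially the same route as the paper's own proof: apply Item~(1) of Lemma~\ref{lem:logic-to-automata-aux} to get $\Aa_\phi$ and the formulae $\xi_{\phi,\tau}$, realize $\Aa_\phi(T)$ by a scheme via Lemmata~\ref{lem:mso-and-schemes} and~\ref{lem:main-technical} applied layer by layer, and take $\phi_\mathit{MSO}$ to be the disjunction of the $\xi_{\phi,\tau}$ over the phenotypes $\tau$ with $\mathit{tv}_\phi(\tau)=\true$. Your additional checks (label preservation through the layers, and that the $\xi_{\phi,\tau}$ introduce no new free variables) are correct and only make explicit what the paper leaves implicit.
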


	\begin{proof}
		Recall that Lemma~\ref{lem:logic-to-automata-aux} gives us a nested \unbound-prefix \mso automaton $\Aa_\phi$ and \mso formulae $\xi_{\phi,\tau}$ for all $\tau\in\Pht{\phi}$
		such that for every valuation $\nu$ in $T$ (where $T$ is now the fixed $(\Sigma_\Gg,\rMax(\Gg))$-tree generated by $\Gg$)
		it holds that $\Aa_\phi(T),\nu\models\xi_{\phi,\tau}$ if and only if $\pht{\phi}{T}{\nu}=\tau$.

		Applying Lemmata~\ref{lem:mso-and-schemes} and~\ref{lem:main-technical} to components of the automaton $\Aa_\phi$,
		out of the scheme $\Gg$ we can construct a scheme $\Gg_+$ that generates the tree $\Aa_\phi(T)$.

		Recall that $\mathit{tv}_\phi(\tau)$ says whether $\phi$ is true in a tree having $\phi$-phenotype $\tau$, and
		consider the \mso formula
		\begin{align*}
			\phi_\mathit{MSO}\equiv\bigvee_{\substack{\tau\in\Pht\phi\\\mathit{tv}_\phi(\tau)}}\xi_{\phi,\tau}\,.
		\end{align*}
		By the above, for every valuation $\nu$, it holds that $\Aa_\phi(T),\nu\models\phi_\mathit{MSO}$ if and only if $T,\nu\models\phi$, as required in the thesis.
	\end{proof}

	Using the above lemma, we can easily deduce effective selection for \msoufin out of effective selection for \mso.

	\begin{proof}[Proof of Theorem~\ref{thm:msoufin-selection}]
		Using effective selection for \mso (which is a theorem with the same statement as Theorem~\ref{thm:msoufin-selection}, but for the \mso logic~\cite{selection})
		for the formula $\phi_\mathit{MSO}$ and for the scheme $\Gg_+$ (created by Lemma~\ref{lem:selection-aux}) we obtain a scheme $\Gg_\phi'$.
		It is almost as required: it generates a tree $T'$ of the same shape as $T$ (but with some additional parts of labels, added by $\Gg_+$),
		where additionally nodes of some set $X$ are marked, so that $\phi_\mathit{MSO}$ holds in $T'$ with $\Xvar$ valuated to $X$.
		Lemma~\ref{lem:selection-aux} implies that then also $\phi$ holds in $T$ with $\Xvar$ valuated to $X$.
		Thus, it is enough to modify node constructors of $\Gg_\phi'$: out of every letter we leave only the original letter coming from $\Gg$, and the last component marking the set $X$,
		while we remove all the components added by $\Gg_+$.
	\end{proof}

	One may want to obtain an analogous theorem for $\Xvar\in\Vv^\mathsf{fin}$, that is, for a sentence of the form $\efin\Xvar.\phi(X)$.
	It is, however, a special case of Theorem~\ref{thm:msoufin-selection}, which can be used with the sentence $\exists\Xvar'.\efin\Xvar.\Xvar\subseteq\Xvar'\land\Xvar'\subseteq\Xvar\land\phi(\Xvar)$.
	We remark, though, that the version of Theorem~\ref{thm:msoufin-selection} for $\Xvar\in\Vv^\mathsf{fin}$ is actually also a corollary of Theorem~\ref{thm:main},
	because there are only countably many finite sets $X$, so we may try one after another, until we find some set for which $\phi$ is satisfied;
	it is easy to hardcode a given set $X$ in the formula (or in the scheme).

	We now show how Theorem~\ref{thm:better-reflection} follows from Theorem~\ref{thm:msoufin-selection}.

	\begin{proof}[Proof of Theorem~\ref{thm:better-reflection}]\label{proof:better-reflection}
		We use Theorem~\ref{thm:msoufin-selection} for
		\begin{align*}
			\phi'(\Xvar')\equiv\forall\Xvar.\mathit{sing}(\Xvar)\arr(\Xvar\subseteq\Xvar'\arr\phi(\Xvar))\land(\neg(\Xvar\subseteq\Xvar')\arr\neg\phi(\Xvar))\,.
		\end{align*}
		The only set $X'$ for which $\phi'$ is true in a tree $T$ is the set of all nodes $\unode$ for which $\phi$ is true in $T$ with $\Xvar$ valuated to $\{\unode\}$.
		Thus the scheme $\Gg_{\phi'}$ obtained from Theorem~\ref{thm:msoufin-selection} satisfies the thesis of Theorem~\ref{thm:better-reflection}.
	\end{proof}

	Our algorithm for Theorem~\ref{thm:main} has nonelementary complexity.
	This is unavoidable, as already model-checking of WMSO sentences on the infinite word over an unary alphabet is nonelementary.
	It would be interesting to find some other formalism for expressing unboundedness properties, maybe using some model of automata, for which the model-checking problem has better complexity.
	We leave this issue for future work.

	Finally, we remark that in our solution we do not use the full power of the simultaneous unboundedness problem, we only use the single-letter case.
	On the other hand, it seems that \msoufin is not capable to express simultaneous unboundedness, only its single-letter case.
	Thus, another direction for a future work is to extend \msoufin to a logic that can actually express simultaneous unboundedness.
	As a possible candidate we see the qcMSO logic introduced in Kaiser, Lang, Le{\ss}enich, and L{\"{o}}ding~\cite{qcMSO}, in which simultaneous unboundedness is expressible.

\bibliographystyle{alpha}
\bibliography{bib}

\appendix

\section{Proof of Fact~\ref{fact:transducer}
		}\label{app:transducer}

	As already said, Fact~\ref{fact:transducer} follows easily from the equivalence between schemes and collapsible pushdown systems.
	We do not even need to know a full definition of these systems.
	Let us recall those fragments that are relevant for us.

	For every $n\in\Nat$, and every finite set $\Gamma$ containing a distinguished initial symbol $\bot\in\Gamma$, there are defined:
	\begin{itemize}
	\item	a set $\PD{n}{\Gamma}$ of collapsible pushdowns of order $n$ over stack alphabet $\Gamma$,
	\item	an initial pushdown $\bot_n\in\PD{n}{\Gamma}$,
	\item	a finite set $\Op{n}{\Gamma}$ of operations on these pushdowns, where every $\op\in\Op{n}{\Gamma}$ is a partial function from $\PD{n}{\Gamma}$ to $\PD{n}{\Gamma}$, and
	\item	a function $\topS\colon\PD{n}{\Gamma}\to\Gamma$ (returning the topmost symbol of a pushdown).
	\end{itemize}
	We assume that $\Op{n}{\Gamma}$ contains the identity operation $\id$, mapping every element of $\PD{n}{\Gamma}$ to itself.

	Having the above, we define a \emph{collapsible pushdown system} (a \emph{CPS} for short) as a tuple $\Cc=(Q,\qIni,n,\Gamma,\delta)$, where $Q$ is a set of states, $\qIni\in Q$ is an initial state,
	$n\in\Nat$ is an order, $\Gamma$ is a finite stack alphabet,
	and $\delta\colon Q\times\Gamma\to (Q\times\Op{n}{\Gamma})\uplus(\Sigma\times Q^*)$ is a transition function (where $\Sigma$ is some alphabet).
	A \emph{configuration} of $\Cc$ is a pair $(q,s)\in Q\times\PD{n}{\Gamma}$.
	A configuration $(p,t)$ is a \emph{successor} of $(q,s)$, written $(q,s)\to_\Cc(p,t)$, if $\delta(q,\topS(s))=(p,\op)$ and $\op(s)=t$.
	We define when a tree is generated by $\Cc$ from $(q,s)$, by coinduction:
	\begin{itemize}
	\item	if $(q,s)\to_\Cc^*(p,t)$, and $\delta(p,\topS(t))=(a,q_1,\dots,q_\nChld)\in\Sigma\times Q^*$, and trees $T_1,\dots,T_r$ are generated by $\Cc$ from $(q_1,t),\dots,(q_\nChld,t)$, respectively,
		then $\symb{a}[T_1][\dots][T_\nChld]$ is generated by $\Cc$ from $(q,s)$,
	\item	if there is no $(p,t)$ such that $(q,s)\to_\Cc^*(p,t)$ and $\delta(p,\topS(t))\in\Sigma\times Q^*$, then $\symb{\omega}$ is generated by $\Cc$ from $(q,s)$.
	\end{itemize}
	Notice that for every configuration $(q,s)$ there is at most one configuration $(p,t)$ such that $(q,s)\to_\Cc^*(p,t)$ and $\delta(p,\topS(t))\in\Sigma\times Q^*$;
	in consequence exactly one tree is generated by $\Cc$ from every configuration.
	While talking about the tree generated by $\Cc$, without referring to a configuration, we mean generating from the initial configuration $(\qIni,\bot_n)$.

	We say that a CPS is \emph{fully convergent} (from a configuration $(q,s)$) if it generates (from $(q,s)$) a tree without using the second item of the definition.
	More formally: we consider the CPS $\Cc_{-\omega}$ obtained from $\Cc$ by replacing $\omega$ with some other letter $\omega'$ (in all transitions), and
	we say that $\Cc$ is fully convergent (from $(q,s)$) if $\Cc_{-\omega}$ generates (from $(q,s)$) a tree without $\omega$-labeled nodes.
	We have the following fact.

	\begin{factfactC}[\cite{collapsible}]\label{fact:scheme-equiv-cps}
		For every scheme $\Gg$ one can construct a CPS $\Cc$ that generates the tree generated by $\Gg$ and, conversely, for every CPS $\Cc$ one can construct a scheme $\Gg$ that generates the tree generated by $\Cc$.
		Both translations preserve the property of being fully convergent.\footnote{%
			Clearly only a fully-convergent CPS/scheme can generate a tree without $\omega$-labeled nodes.
			Thus it is easy to preserve the property of being fully convergent: we can replace all appearances of $\omega$ by some fresh letter $\omega'$, switch to the other formalism,
			and then replace $\omega'$ back by $\omega$.}
	\newqed\end{factfactC}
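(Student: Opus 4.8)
The statement is the equivalence of Hague, Murawski, Ong, and Serre~\cite{collapsible}, so the plan is to recall the structure of its proof rather than to reinvent it. There are two translations to produce, and in both directions it suffices to match the \emph{generated trees}: preservation of full convergence then follows from the trick described in the footnote, namely replacing $\omega$ by a fresh letter $\omega'$, translating, and observing that an $\omega'$-labeled node is generated on one side exactly when it is on the other (since the two trees coincide). Thus the core task is to relate the coinductive B\"ohm-tree semantics of the lambda-term $\Lambda(\Gg)$ with the coinductive tree-generation semantics of a CPS $\Cc$.

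For the direction from schemes to CPSs, I would simulate the scheme by an abstract machine and then encode that machine in a collapsible pushdown. The natural choice is the Krivine machine evaluating $\Lambda(\Gg)$: a configuration is a closure, that is, a subterm of $\Gg$ together with an environment binding its free variables to further closures. The order of the scheme (with $\ord(\otyp)=0$ and $\ord(\alpha\arr\beta)=\max(\ord(\alpha)+1,\ord(\beta))$) bounds the nesting depth of these environments, and I would represent an environment by a collapsible pushdown of the corresponding order $n$ over a stack alphabet $\Gamma$ encoding the finitely many subterms of $\Gg$ together with binding information. Head beta-reduction steps of the machine then translate into operations from $\Op{n}{\Gamma}$: encountering an application pushes argument closures, entering a lambda-binder records a binding level, and---crucially---looking up a variable is realized by the \emph{collapse} operation, which jumps back to the pushdown that was in force when the binder of that variable was traversed. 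This is precisely the role collapse plays, and the reason ordinary higher-order pushdowns (corresponding to safe schemes) do not suffice. Whenever the machine exposes a head of the form $\symb{a}[\tK_1][\dots][\tK_\nChld]$, the CPS emits $a$ via a transition in $\Sigma\times Q^*$ and branches into the $\nChld$ continuation configurations; one then proves by coinduction that the tree generated this way equals $\BT(\Lambda(\Gg))$.

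For the converse direction, from a CPS $\Cc$ of order $n$ to a scheme, I would build nonterminals that internalize pushdown operations as higher-order functions. The idea is that an order-$k$ pushdown is represented by a term of order $k$ which, supplied with a continuation describing what to do for each possible topmost symbol and each way of popping, produces the generated subtree; pushing and the collapse link are encoded by passing suitable higher-order arguments, so that the $n$-fold nesting of the stack is mirrored by nonterminals of order up to $n$. Indexing nonterminals by states of $\Cc$ and by top stack symbols, and defining $\Rr$ so that it mimics $\delta$ (with $\id$-transitions composed away), yields a scheme generating the same tree, again verified by a coinduction on the shared tree.

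The main obstacle is the scheme-to-CPS direction, and within it the faithful encoding of the pointer structure of environments by the collapse operation. One must set up an invariant relating each reachable Krivine configuration to a CPS configuration in which the collapse links point exactly to the pushdowns recording the relevant binders, and show this invariant is preserved by every machine step. Getting the order bookkeeping right---so that a variable of order $k$ is always looked up through a collapse into an order-$k$ pushdown---is the delicate point, and it is exactly here that a purely non-collapsible higher-order pushdown would fail. Everything else, including the preservation of full convergence, is then routine given the tree-level correspondence.
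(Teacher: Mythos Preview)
The paper does not prove this fact at all: it is stated as a cited result from~\cite{collapsible}, with only the footnote supplying the argument for preservation of full convergence. Your proposal is therefore not so much to be compared against a proof in the paper as against the cited literature, and in that respect it is a reasonable high-level outline. One point worth noting: the original construction in~\cite{collapsible} for the scheme-to-CPS direction does not go through a Krivine machine but through \emph{traversals} over the computation tree (a game-semantic device), with the collapse links encoding the pointer structure of the traversal; the Krivine-machine route you describe is a later alternative (closer to Salvati and Walukiewicz~\cite{KrivineWS}). Both approaches are valid and share the same essential insight---that collapse is what lets the pushdown recover the environment in force when a binder was encountered---so your sketch is not wrong, just not quite a recollection of~\cite{collapsible} itself. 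The CPS-to-scheme direction and the full-convergence argument are as you describe.
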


	In Fact~\ref{fact:transducer} we are given a finite tree transducer $\Tt=(\Sigma,\rMax,P,\pIni,\delta_\Tt)$, and a scheme $\Gg$ generating a $(\Sigma,\rMax)$-tree $T$,
	and we want to construct a scheme $\Gg_\Tt$ that generates the tree $\Tt(T)$.
	By Fact~\ref{fact:convergent} we can assume that $\Gg$ is fully convergent.
	As a first step, we translate $\Gg$ to a fully convergent CPS $\Cc=(Q,\qIni,n,\Gamma,\delta_\Cc)$ generating $T$, using Fact~\ref{fact:scheme-equiv-cps}.

	Then, we create a CPS $\Cc_\Tt=(R,(\qIni,\pIni),n,\Gamma,\delta)$ by combining $\Cc$ with $\Tt$.
	Its set of states $R$ contains states of two kinds: pairs $(q,p)\in Q\times P$, and pairs $(q,U)$
	where $q\in Q$ and $U$ is a subterm of $\delta_\Tt(p,a,\nChld)$ for some $(p,a,\nChld)\in P\times\Sigma\times\{0,\dots,\rMax\}$.
	We define the transitions as follows:
	\begin{itemize}
	\item	if $\delta_\Cc(q,\chi)=(q',\op)\in Q\times\Op{n}{\Gamma}$, then $\delta((q,p),\chi)=((q',p),\op)$,
	\item	if $\delta_\Cc(q,\chi)=(a,q_1,\dots,q_\nChld)\in\Sigma\times Q^*$, then $\delta((q,p),\chi)=((q,\delta_\Tt(p,a,\nChld)),\id)$,
	\item	if $\delta_\Cc(q,\chi)\in\Sigma\times Q^*$, then $\delta((q,\symb{b}[U_1][\dots][U_k]),\chi)=(b,(q,U_1),\dots,(q,U_k))$,
	\item	if $\delta_\Cc(q,\chi)=(a,q_1,\dots,q_\nChld)\in\Sigma\times Q^*$ and $i\in\{1,\dots,\nChld\}$, then $\delta((q,\xVar_{i,p}),\chi)=((q_i,p),\id)$, and
	\item	all other transitions are irrelevant, and can be defined arbitrarily.
	\end{itemize}\smallskip

	It is easy to prove by coinduction that if $\Cc$ is fully convergent from some configuration $(q,s)$,
	then, for every state $p\in P$, $\Cc_\Tt$ generates $\Tt_p(T_{q,s})$ from $((q,p),s)$, where $T_{q,s}$ is the tree generated by $\Cc$ from $(q,s)$.
	Indeed, because $\Cc$ is fully convergent from $(q,s)$, for some $(q',t$) we have $(q,s)\to_\Cc^*(q',t)$ and $\delta(q',\topS(t))=(a,q_1,\dots,q_\nChld)\in\Sigma\times Q^*$.
	In such a situation $((q,p),s)\to_{\Cc_\Tt}^*((q',p),t)$ (where we use transitions of the first kind).
	From $((q',p),t)$ the CPS $\Cc_\Tt$ uses a transition of the second kind,
	and then starts generating the tree $\delta_\Tt(p,a,\nChld)$ until a variable is reached (using transitions of the third kind).
	When a variable $\xVar_{i,p'}$ is reached, $\Cc_\Tt$ enters the configuration $((q_i,p'),t)$ (a transition of the fourth kind),
	which, by the assumption of coinduction, means that it continues by generating the tree $\Tt_{p'}(T_{q_i,t})$, where $T_{q_i,t}$ is the tree generated by $\Cc$ from $(q_i,t)$.

	In particular we have that $\Cc_\Tt$ generates $\Tt(T)$.
	At the end we translate $\Cc_\Tt$ to a scheme $\Gg_\Tt$ generating the same tree, using again Fact~\ref{fact:scheme-equiv-cps}.

\end{document}